\newcommand{\g}[1]{\boldsymbol{#1}}
\newcommand{\R}[0]{\mathbb{R}} 
\newcommand{\C}[0]{\mathbb{C}} 
\renewcommand{\i}[0]{\mathrm{i}} 
\newtheorem{theorem}{Theorem}
\newtheorem{lemma}{Lemma}
\newtheorem{definition}{Definition}
\newtheorem{proposition}{Proposition}
\newtheorem{corollary}{Corollary}
\title{\bf Sparse phase retrieval\\ via group-sparse optimization}
\author{F. Lauer$^1$ and H. Ohlsson$^{2,3}$
\medskip\\
$^1$ \small LORIA, Universit\'e de Lorraine, CNRS, Inria, France\\
$^2$ \small Dept. of Electrical Engineering and Computer Sciences,
University of California, Berkeley, USA\\
$^3$ \small Dept. of Electrical Engineering, Link\"oping University, Sweden
}
\begin{document}

\maketitle

\begin{abstract}
This paper deals with sparse phase retrieval, i.e., the problem of estimating a vector from quadratic measurements under the assumption that few components are nonzero. In particular, we consider the problem of finding the sparsest vector consistent with the measurements and reformulate it as a group-sparse optimization problem with linear constraints. Then, we analyze the convex relaxation of the latter based on the minimization of a block $\ell_1$-norm and show various exact recovery and stability results in the real and complex cases. Invariance to circular shifts and reflections are also discussed for real vectors measured via complex matrices.
\end{abstract}

\section{Introduction}

The problem of recovering a signal from quadratic measurements is known as phase retrieval. A typical case with many applications, for instance in optics \cite{Walther63} or crystallography \cite{Harrison93}, is when the measurements correspond to the (squared) magnitude of the Fourier transform of the signal. Here, we consider the more general setting of arbitrary quadratic measurements, $y_i= \g x^H \g Q_i\g x$, $i=1,\dots, N$, while focusing on the case where the signal $\g x\in\C^n$ is assumed to be sparse, i.e., with few nonzero entries in $\g x$. As in compressive sensing \cite{Donoho06,Candes06}, which deals with the recovery of sparse signals from linear measurements, the sparsity prior reduces the number of measurements required to recover the signal.

\paragraph{Related work.} 
Seminal works on phase retrieval \cite{Kohler72,Gerchberg72,Gonsalves76,Fienup82} did not consider the sparsity prior. Though these methods were able to incorporate prior information on the support, they were not designed to estimate the support with limited information on its size. 
More recently, matrix lifting techniques were developed in \cite{Candes13,Candes13b,Waldspurger12} for phase retrieval and more particularly for the case of sparse signals in \cite{Shechtman11,Jaganathan12,Ohlsson13quadratic}. The basic idea is to apply a change of variable resulting in linearized measurement equations with a rank-1 constraint on the new matrix variable $\g X = \g x\g x^H$. Then, the rank-1 constraint is relaxed to the problem of minimizing the rank of the matrix, which is further relaxed to the minimization of the nuclear norm. In these methods, the sparsity prior is typically incorporated as the minimization of an $\ell_1$-norm, which induces a trade-off between the satisfaction of the rank-1 constraint  and the sparsity of the solution. 
Other methods focusing on sparsity are typically iterative, like Fienup-type methods \cite{Mukherjee12} using alternate projections or the GESPAR method \cite{Shechtman13} implementing a local search strategy with a bi-directional greedy algorithm. These iterative methods usually come without recovery guarantees. 

Finally, note that sparse phase retrieval also enters the more general framework of nonlinear compressed sensing, as investigated in \cite{Ohlsson13nlbp} for analytic functions computing the measurements, in \cite{Ehler13} for quasi-linear functions and in \cite{Lauer13b} for polynomials.

\paragraph{Contribution. }
We propose a convex approach to the sparse phase retrieval problem. This approach relies on two main steps: the linearization of the constraints inducing a group-sparse structure on the variables and a convex relaxation of the group-sparse optimization problem enforcing this structure. 
More precisely, the linearization is based on the Veronese map lifting the signal to a higher dimensional space. This map is invariant to a global sign change and preserves sparsity in the sense that the lifted signal has a group-sparsity similar to the sparsity of the original signal. Thus, the proposed method amounts to estimating a group-sparse signal satisfying linear constraints, from which the original signal can be recovered. This estimation relies on a convex relaxation of the group-sparse problem based on a sum of norms, or block $\ell_1$-norm. Thus, while methods based on matrix lifting lead to semi-definite programming problems, the proposed approach yields a more amenable second-order cone programming formulation. 
This formulation is also easily extended to deal with noisy measurements. 
In addition to these algorithmic benefits, we derive exact recovery conditions in the noiseless case and stable recovery guarantees in the presence of noise. 

Note that the approach taken here is similar in spirit to the one derived in \cite{Lauer13b} for the more general problem of finding sparse solutions of polynomial systems of equations. However, the analysis in \cite{Lauer13b} is limited to the real case and does not apply to polynomials without linear terms as the ones found in phase retrieval. 

\paragraph{Paper organization.}
For the sake of clarity, we first detail in Sect.~\ref{sec:real} the proposed method in the real case before extending it in Sect.~\ref{sec:complex} to the complex case. The effect of noise and stability results are discussed in Sect.~\ref{sec:noise}. The case of real signals measured via complex vectors is considered in Sect.~\ref{sec:complexreal} which also deals with the invariance of the measurements to circular shifts and reflections. Finally, Section~\ref{sec:exp} tests the proposed methods in numerical experiments.

\paragraph{Notations.} 
Matrices are written with bold uppercase letters and vectors in bold lowercase letters, except for the $i$th column $\g A_i$ of a matrix $\g A$. The notation $(\g A)_{i,j}$ denotes the element at the $i$th row and $j$th column of a matrix $\g A$. $\Re(\cdot)$ and $\Im(\cdot)$ denote the real and imaginary parts of a complex number, vector or matrix, and $\i$ the imaginary unit. 
The superscripts $^T$ and $^H$ denote the transpose and conjugate transpose, respectively, i.e., $\g z^H = \overline{\g z}^T$. $\|\cdot\|_p$ denotes the $\ell_p$-norm in $\R^n$, while $\|\cdot\|$ denotes the norm in $\C^n$ induced by the inner product as $\|\g z\| = \sqrt{\g z^H\g z}$. The $\ell_0$-pseudo-norm of a real or complex vector $\g x$ of dimension $n$ is defined as $\|\g x\|_0 = \left|\{j\in\{1,\dots, n\} : x_j \neq 0\}\right|$ and denotes the number of nonzero components $x_j$. We also define the $\ell_0$-pseudo-norm of a vector-valued sequence $\{\g u_i\}_{i=1}^N$ as $\|\{\g u_i\}_{i=1}^N\|_0 = \left|\{i\in\{1,\dots, N\} : \g u_i \neq \g 0\}\right|$. 

\section{The real case} 
\label{sec:real}

We write the sparse phase retrieval problem as
\begin{align}\label{eq:P0}
	\min_{\g x\in\R^n} \ & \|\g x\|_0 \\
	\mbox{s.t.}\ & y_i = (\g q_i^T \g x)^2 = \g x^T\g Q_i \g x,\quad i=1,\dots,N, \nonumber
\end{align}
where $y_i\in \R$ are the measurements and $\g Q_i= \g q_i\g q_i^T  \in \R^{n\times n}$. 
Due to symmetry, solutions to \eqref{eq:P0} are defined up to their sign and the goal is to obtain an estimate $\hat{\g x} = \pm\g x_0$, for $\g x_0$ in the solution set of \eqref{eq:P0}. In particular, we are interested in the case where~\eqref{eq:P0} has a unique pair of solutions $\{\g x_0, - \g x_0\}$, while conditions ensuring such a uniqueness are discussed in \cite{Ohlsson13c,Ranieri13}.

The proposed method relies on two subsequent relaxations. While the first one linearizes the constraints, the second one convexifies the objective function. 

\subsection{First level of relaxation} 

Let the Veronese map of degree 2, $\nu : \R^n \rightarrow \R^M$, be defined by 
$$
	\nu(\g x) = [x_1^2,\ x_1x_2,\ \dots,\ x_2^2,\ x_2 x_3,\ \dots,\ x_{n-1}^2,\ x_{n-1} x_n,\ x_n^2]^T ,
$$
and the subscript $_{ij}$ denote the index of its component equal to $x_ix_j$, i.e.,
\begin{equation}\label{eq:ij}
	_{ij} = \sum_{k=1}^{\min \{i,j\} - 1} (n-k + 1) + |j-i| + 1 = \sum_{k=1}^{\min \{i,j\} - 1} (n-k) + \min \{i,j\} + |j-i|. 
\end{equation}
This notation is symmetric, i.e., $_{ij}$ and $_{ji}$ denote the same index, and will be used throughtout the paper to index the components of vectors of $\R^M$ or $\C^M$. 

The constraints of the phase retrieval problem~\eqref{eq:P0} can be rewritten as 
$$
 \quad\g A \nu(\g x) = \g y
$$
with $\g A \in\R^{N\times M}$ and $M=\begin{pmatrix}n+1\\2\end{pmatrix} = n(n+1)/2$. 

Let $\g W_j$ be an $n\times M$-binary matrix such that\footnote{More precisely, the entries of $\g W_j$ are given by $(\g W_j)_{k,l} = \delta_{l,jk}$, $l=1,\dots,M$, $k=1,\dots,n$, where $\delta$ is the Kronecker delta and $_{jk}$ is an index as in~\eqref{eq:ij}.} $\g W_j \nu(\g x) = x_j\g x$, i.e., $\g W_j \nu(\g x)$ is the vector of $n$ entries corresponding to the monomials in $\nu(\g x)$ including $x_j$. Then, we have
$$
	x_j = 0\ \Leftrightarrow\ \g W_j \nu (\g x) = \g 0 
$$
and the objective function in~\eqref{eq:P0} can be written as
\begin{equation}\label{eq:sparsenu}
	\|\g x\|_0 = \|\{\g W_j \nu(\g x)\}_{j=1}^n\|_0.
\end{equation}
Thus, \eqref{eq:P0} can be reformulated as the nonlinear group-sparse optimization problem
\begin{align}\label{eq:P0group}
	\min_{\g x\in\R^n} \ & \|\{\g W_j \nu (\g x)\}_{j=1}^n\|_0 \\
	\mbox{s.t.}\ &  \g A\nu (\g x) = \g y. \nonumber
\end{align}

Note that $\nu(\g x) = \nu(-\g x)$, but that for $\g x\neq \pm \g x_0$, $\nu(\g x)\neq \nu(\g x_0)$. Thus, the problem can be posed as the one of recovering the value of $\nu(\g x_0)$, from which $\g x_0$ can be inferred up to its sign. 

To estimate $ \nu(\g x_0)$, we relax \eqref{eq:P0group} to 
\begin{align}\label{eq:P0groupv}
	\min_{\g v\in\R^M} \ & \|\{\g W_j \g v\}_{j=1}^n\|_0 \\
	\mbox{s.t.}\ &  \g A\g v = \g y  \nonumber\\	
			& v_{jj} \geq 0,\ j=1,\dots,n, \nonumber
\end{align}
where the variables in $\g v$ estimating the components of $\nu(\g x)$ are not constrained to be interdepent monomials of $n$ base variables, but the last constraints in~\eqref{eq:P0groupv} nonetheless ensure that the $v_{jj}$'s estimating the $x_j^2$'s are positive. 

\subsection{Convex relaxation}

Problem~\eqref{eq:P0groupv} is a (linear) group-sparse optimization problem with highly overlapping groups. While groupwise-greedy algorithms, such as the one proposed in \cite{Lauer13b}, can be applied, their analysis is not available for the case of overlapping groups. Therefore, here, we consider the convex relaxation approach which aims at solving \eqref{eq:P0groupv} via the following surrogate formulation: 
\begin{align}\label{eq:P1group}
	\hat{\g v} = \arg\min_{\g v\in\R^M} \ & \sum_{j=1}^n \|\g W_j  \g W \g v\|_2 \\
	\mbox{s.t.}\ &  \g A\g v = \g y \nonumber\\	
			& v_{jj} \geq 0,\ j=1,\dots,n ,\nonumber
\end{align}
where we introduced the diagonal matrix $\g W$ of precompensating weights $(\g W)_{i,i} = w_i = \|\g A_i\|_2$, and which can be solved efficiently by off-the-shelf Second-Order Cone Programming (SOCP) solvers. 
Then, we easily obtain an estimate of $\g x_0$ from the estimate $\hat{\g v}$ of $\nu(\g x)$ as $\hat{\g x} = \nu^{-1}(\hat{\g v})$, where the inverse mapping $\nu^{-1}$ is defined as 
$$
	\nu^{-1}(\g v) = \begin{cases}
		\displaystyle{\frac{1}{ \sqrt{v_{ii}} }\left[v_{1i},\ v_{2i},\ \dots,\ v_{ni} \right]^T},\quad  \mbox{if } i > 0 \mbox{ and } \displaystyle{\frac{v_{ji}^2}{ v_{ii} } = v_{jj}},\ \forall j\in\{1,\dots,n\} \\
	\g 0,\ \mbox{otherwise} ,
	\end{cases}
$$
where
$$
	i= \begin{cases}
		\displaystyle{\min_{j\in\{1,\dots,n\}} j,\ \mbox{s.t. } v_{jj} > 0,\ \mbox{if } \exists j\ \mbox{such that }  \ v_{jj}> 0}\\ 
		 0,\quad \mbox{otherwise}  .
	\end{cases}
$$
In the above, the first nonzero entry (of index $i$) is assumed to be positive to fix the signs and make $\nu^{-1}$ injective. 
This definition also ensures that
$$
	\nu^{-1}(\nu(\g x)) = \pm \g x
$$
and that 
\begin{equation}\label{eq:sparsenu_1}
	\|\nu^{-1}(\g v)\|_0 \leq \|\{\g W_j\g v\}_{j=1}^n\|_0 
\end{equation}
since, for $\nu^{-1}(\g v)\neq \g 0$, $\g W_j \g v = 0 \Rightarrow v_{ji}=0 \Rightarrow \left(\nu^{-1}(\g v) \right)_{j} = 0$. 

\subsection{Analysis}

We now turn to theoretical guarantees offered by the proposed approach. 
First, the following theorem provides the rationale for tackling the sparse phase retrieval problem via the group-sparse optimization formulation \eqref{eq:P0groupv}. 

\begin{theorem}\label{thm:grouppoly}
	If the solution $\g v^*$ to \eqref{eq:P0groupv} is unique and yields $\g x^* = \nu^{-1}(\g v^*) \neq \g 0$ such that $y_i =  (\g x^*)^T\g Q_i \g x^*$, $i=1,\dots,N$, then $\{\g x^*, -\g x^*\}$ is the unique pair of solutions of \eqref{eq:P0}.
\end{theorem}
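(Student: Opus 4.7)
}
The overall strategy is to use $\g v^*$ as a bridge between the group-sparse problem \eqref{eq:P0groupv} and the original sparse phase retrieval problem \eqref{eq:P0}. I would first show that $\g v^*$ must in fact coincide with $\nu(\g x^*)$, and then use this identification, together with the sign-invariance of $\nu$ and the uniqueness of $\g v^*$, to force any other feasible point of \eqref{eq:P0} to have strictly larger $\ell_0$-count than $\g x^*$.

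\paragraph{Step 1: $\g v^* = \nu(\g x^*)$.}
The hypothesis $y_i = (\g x^*)^T \g Q_i \g x^*$ is precisely $\g A \nu(\g x^*) = \g y$, and the diagonal entries of $\nu(\g x^*)$ are squares, so $\nu(\g x^*)$ is feasible for \eqref{eq:P0groupv}. By the identity \eqref{eq:sparsenu}, its objective value equals $\|\g x^*\|_0$. On the other hand, \eqref{eq:sparsenu_1} applied to $\g v^*$ gives $\|\g x^*\|_0 = \|\nu^{-1}(\g v^*)\|_0 \leq \|\{\g W_j \g v^*\}_{j=1}^n\|_0$. Combining these bounds with the optimality of $\g v^*$ yields equality throughout, so $\nu(\g x^*)$ is also optimal for \eqref{eq:P0groupv}; uniqueness then forces $\g v^* = \nu(\g x^*)$.

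\paragraph{Step 2: Optimality and uniqueness in \eqref{eq:P0}.}
Now let $\g x^\dagger \in \R^n$ be any feasible point of \eqref{eq:P0} with $\g x^\dagger \neq \pm \g x^*$. The Veronese lift $\nu(\g x^\dagger)$ is feasible for \eqref{eq:P0groupv} by the same reasoning as above, and the sign-injectivity of $\nu$ recalled just after \eqref{eq:P0groupv} (i.e., $\nu(\g x)=\nu(\g z)$ only if $\g x = \pm \g z$) together with Step~1 gives $\nu(\g x^\dagger) \neq \g v^*$. By the assumed uniqueness of $\g v^*$ in \eqref{eq:P0groupv}, the objective value of $\nu(\g x^\dagger)$ must be strictly larger than that of $\g v^*$. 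Applying \eqref{eq:sparsenu} on the left and Step~1 on the right gives
\[
\|\g x^\dagger\|_0 = \|\{\g W_j \nu(\g x^\dagger)\}_{j=1}^n\|_0 > \|\{\g W_j \g v^*\}_{j=1}^n\|_0 = \|\g x^*\|_0 ,
\]
so $\g x^*$ is the unique minimizer of \eqref{eq:P0} modulo sign.

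\paragraph{Expected obstacle.} The only nontrivial point is Step~1: the hypothesis $\g x^* = \nu^{-1}(\g v^*)\neq \g 0$ only guarantees the consistency conditions $v^*_{ji}{}^2 / v^*_{ii} = v^*_{jj}$ for a single column index $i$, which does not by itself imply $\g v^* = \nu(\g x^*)$ (the off-diagonal cross terms $v^*_{jk}$ with $j,k\neq i$ are a priori free). Squeezing the objective values via \eqref{eq:sparsenu} and \eqref{eq:sparsenu_1} and then invoking uniqueness is precisely what is needed to close this gap; everything afterwards is routine.
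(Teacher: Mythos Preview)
Your proof is correct. The core ingredients are the same as the paper's --- feasibility of $\nu(\g x)$ in \eqref{eq:P0groupv}, the sparsity identities \eqref{eq:sparsenu} and \eqref{eq:sparsenu_1}, and the uniqueness of $\g v^*$ --- but the organization differs in one useful respect. The paper never establishes your Step~1 ($\g v^* = \nu(\g x^*)$); instead it works directly with a hypothetical competitor $\g x_0$, shows $\nu(\g x_0)$ is optimal, forces $\nu(\g x_0) = \g v^*$ by uniqueness, and then appeals to Lemma~\ref{lem:nudiffreal} (which only compares the diagonal entries $v_{jj}$) to reach a contradiction. Your route first pins down $\g v^* = \nu(\g x^*)$ by squeezing the objective values, and then invokes the full sign-injectivity of $\nu$ stated just after \eqref{eq:P0group}. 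This is slightly more self-contained: it bypasses Lemma~\ref{lem:nudiffreal} entirely and avoids the paper's intermediate claim that $\g x_0 \neq \pm \g x^*$ forces $x_{0j}^2 \neq (x_j^*)^2$ for some $j$, which is not literally true (take $\g x_0 = (1,1)$, $\g x^* = (1,-1)$) even though the final conclusion $\nu(\g x_0) \neq \g v^*$ is. The cost is that your Step~1 is an extra paragraph; the benefit is a cleaner logical flow and the concrete identification $\g v^* = \nu(\g x^*)$, which is independently informative.
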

\begin{proof}
Assume there is an $\g x_0\neq \pm \g x^*$ satisfying the constraints of \eqref{eq:P0} and at least as sparse as $\g x^*$. Then, $\g A\nu(\g x_0) = \g y$ and, by using~\eqref{eq:sparsenu} and~\eqref{eq:sparsenu_1}, 
$$
	\|\{\g W_j \nu(\g x_0)\}_{j=1}^n\|_0 = \|\g x_0\|_0 \leq \|\g x^*\|_0\leq \|\{\g W_j\g v^*\}_{j=1}^n\|_0 ,
$$
which contradicts the fact that $\g v^*$ is the unique solution to \eqref{eq:P0group} unless $\nu(\g x_0)=\g v^*$. But since $\g x_0 \neq \pm \g x^*$, we have  $x_{0j}^2 \neq (x_j^*)^2$ for some $j\in\{1,\dots,n\}$, which implies $\left(\nu(\g x_0)\right)_{jj}\neq \left(\nu(\g x^*)\right)_{jj} = \left(\nu(\nu^{-1}(\g v^*))\right)_{jj}$. 
Therefore, by using Lemma~\ref{lem:nudiffreal} in Appendix~\ref{sec:lemmas} with the assumption $\g x^*= \nu^{-1}(\g v^*) \neq \g 0$, there cannot be such an $\g x_0$.
\end{proof}

The following results regarding the convex formulation~\eqref{eq:P1group} are based on the notion of mutual coherence. 
\begin{definition}\label{def:mu} The {\em mutual coherence} of a matrix $\g A = [\g A_1,\dots, \g A_M] \in\R^{N\times M}$ is 
$$
	\mu(\g A) = \max_{1\leq i< j \leq M} \frac{|\g A_i^T \g A_j|}{\|\g A_i\|_2 \|\g A_j\|_2} .
$$
\end{definition}

With this definition, we can state an exact recovery result (proof given in Appendix~\ref{proof:groupsparse}). 

\begin{theorem}\label{thm:groupsparse}
Let $\g x_0$ be such that $y_i = \g x_0^T\g Q_i \g x_0$, $i=1,\dots,N$, and $\g v_0=\nu(\g x_0)$. If the condition
$$
	 \|\g x_0\|_0 <  \frac{1}{2\sqrt{n}}\sqrt{1+ \frac{1}{\mu^2(\g A)}} 
$$
holds, then $\g v_0$ is the unique solution to \eqref{eq:P1group}.
\end{theorem}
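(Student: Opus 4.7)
The plan is to argue by contradiction and reduce the uniqueness claim to a null-space property for the group objective, then close that property by means of the mutual coherence of $\g A$.

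First, I would take any other optimum $\g v^*$ of~\eqref{eq:P1group}, set $\g h = \g v^* - \g v_0$, and note that $\g A\g h = \g 0$. Since $\g v_0 = \nu(\g x_0)$ is group-wise supported on $S := \{j : x_{0j}\neq 0\}$, we have $\g W_j\g W\g v_0 = \g 0$ whenever $j\notin S$, while for $j\in S$ the reverse triangle inequality gives $\|\g W_j\g W\g v^*\|_2 \geq \|\g W_j\g W\g v_0\|_2 - \|\g W_j\g W\g h\|_2$. Summing and using optimality of $\g v^*$ forces
\[
\sum_{j\notin S}\|\g W_j\g W\g h\|_2 \;\leq\; \sum_{j\in S}\|\g W_j\g W\g h\|_2 .
\]
It therefore suffices to show that the reverse inequality holds strictly for every $\g h \in \ker\g A\setminus\{\g 0\}$ under the assumed bound on $\|\g x_0\|_0$.

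Next I would normalize: let $\tilde{\g A} = \g A\g W^{-1}$, which has unit-norm columns by the choice $w_i = \|\g A_i\|_2$, and $\tilde{\g h} = \g W\g h$, so that $\tilde{\g A}\tilde{\g h} = \g 0$. Projecting this identity onto each column $\tilde{\g A}_k$ and invoking Definition~\ref{def:mu} yields the pointwise coherence estimate $(1+\mu(\g A))|\tilde h_k|\leq \mu(\g A)\|\tilde{\g h}\|_1$ for every coordinate $k$. Because $\g W_j\tilde{\g h}$ has exactly $n$ entries, combining this with $\|\cdot\|_2 \leq \sqrt n\,\|\cdot\|_\infty$ delivers the upper bound
\[
\sum_{j\in S}\|\g W_j\g W\g h\|_2 \;\leq\; |S|\sqrt n\,\frac{\mu(\g A)}{1+\mu(\g A)}\,\|\tilde{\g h}\|_1 .
\]

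For the matching lower bound on $\sum_{j\notin S}\|\g W_j\g W\g h\|_2$, I would square each summand and exploit the fact that the coordinate of $\tilde{\g h}$ at an index $_{kl}$ contributes exactly to the groups $G_k$ and $G_l$. Separating indices $_{kl}$ according to how many of $k,l$ lie in $S^c$ extracts the portion of $\|\tilde{\g h}\|_2^2$ living off the ordinary support $T$ of $\g v_0$ in $\R^M$, producing a lower bound of the form $\|\tilde{\g h}_{T^c}\|_2$. Combined with $\|\tilde{\g h}_T\|_1 \leq |T|\,\frac{\mu(\g A)}{1+\mu(\g A)}\|\tilde{\g h}\|_1$ (with $|T| = \binom{|S|+1}{2}$, obtained from the pointwise estimate) and a Cauchy--Schwarz comparison between $\|\cdot\|_1$ and $\|\cdot\|_2$, this allows the lower bound to be reexpressed in terms of $\|\tilde{\g h}\|_1$ as well.

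Matching the two sides produces a quadratic inequality in $|S|$ which, after rearrangement and a square root, fails precisely when $\|\g x_0\|_0 < \frac{1}{2\sqrt n}\sqrt{1 + 1/\mu(\g A)^2}$, delivering the contradiction. The main obstacle I anticipate is this overlapping-group bookkeeping: each coordinate of $\g v$ belongs to one or two of the $n$ groups $G_j$, and the constants that arise when splitting $S$- and $S^c$-contributions must be tracked carefully in order to end up with the clean $\sqrt n$ factor and the $\sqrt{1+1/\mu^2}$ form rather than a looser threshold. I expect the positivity constraints $v_{jj}\geq 0$ to play no active role here; their purpose in~\eqref{eq:P1group} is to make $\nu^{-1}(\hat{\g v})$ well-defined, not to tighten the recovery condition.
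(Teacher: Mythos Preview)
Your opening reduction is correct and matches the paper: setting $\g h=\g v^*-\g v_0\in\ker\g A$ and using the reverse triangle inequality on the groups indexed by $S=\{j:x_{0j}\neq0\}$ is exactly how the paper begins, and your remark that the constraints $v_{jj}\geq 0$ play no role is also right. The divergence---and the gap---is in how you close the null-space condition.

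The paper does \emph{not} compare the group norms to the coordinate $\ell_1$-norm $\|\tilde{\g h}\|_1$. Instead it uses the $\ell_2$ coherence estimate of Lemma~\ref{lem:bounddelta2}, namely $w_i^2\delta_i^2\leq\frac{\mu^2}{1+\mu^2}\|\g W\g\delta\|_2^2$, sums the $n$ terms in one group to get $\|\g W_j\g W\g\delta\|_2^2\leq n\frac{\mu^2}{1+\mu^2}\|\g W\g\delta\|_2^2$, and then applies the overlap inequality $\|\g W\g\delta\|_2^2\leq\sum_k\|\g W_k\g W\g\delta\|_2^2\leq\bigl(\sum_k\|\g W_k\g W\g\delta\|_2\bigr)^2$ to compare each group norm to the \emph{block} $\ell_1$-norm $\sum_k\|\g W_k\g W\g\delta\|_2$. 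That last comparison is lossless in dimension, which is why only a single $\sqrt n$ survives and the threshold comes out as $\frac{1}{2\sqrt n}\sqrt{1+1/\mu^2}$.

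Your route instead anchors everything to $\|\tilde{\g h}\|_1$. The upper bound you state is fine (and even carries the slightly sharper factor $\mu/(1+\mu)$), but the lower bound you outline---passing through $\|\tilde{\g h}_{T^c}\|_2$ with $|T|=\binom{|S|+1}{2}$ and then Cauchy--Schwarz---forces a comparison between $\|\cdot\|_2$ and $\|\cdot\|_1$ on a set of size of order $M=n(n+1)/2$, which costs a factor $\sqrt M\sim n$ rather than $\sqrt n$. Carrying this through yields a threshold of order $\frac{1}{n}$ in the leading term, not $\frac{1}{\sqrt n}$; the ``quadratic inequality in $|S|$'' you anticipate does not collapse to the stated form. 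The bookkeeping you flag as the main obstacle is not merely delicate here---it is where the argument actually loses a $\sqrt n$. To recover the theorem you need the $\ell_2$ pointwise bound of Lemma~\ref{lem:bounddelta2} and the block-$\ell_1$ yardstick $\sum_k\|\g W_k\g W\g\delta\|_2$, as the paper does; no separate lower bound on $\sum_{j\notin S}$ is needed once you rewrite the condition as $\sum_{j=1}^n-2\sum_{j\in S}>0$.
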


\begin{corollary}\label{cor:uniqueness}
Let $\g x_0$ be a feasible point of \eqref{eq:P0}. If the condition
$$
	 \|\g x_0\|_0 <  \frac{1}{2\sqrt{n}}\sqrt{1+ \frac{1}{\mu^2(\g A)}} 
$$
holds, then $\{\g x_0, -\g x_0\}$ is the unique pair of solutions to the minimization problem~\eqref{eq:P0} and they can be computed as $\g x_0 = \pm \nu^{-1}(\hat{\g v})$ with $\hat{\g v}$ the solution to \eqref{eq:P1group}.
\end{corollary}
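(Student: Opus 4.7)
The plan is to derive Corollary \ref{cor:uniqueness} as a short consequence of Theorem \ref{thm:groupsparse}, applied twice.

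First I would apply Theorem \ref{thm:groupsparse} to $\g x_0$ itself. The corollary's hypothesis is exactly the hypothesis required by the theorem, so $\g v_0 = \nu(\g x_0)$ is the unique solution $\hat{\g v}$ to \eqref{eq:P1group}. By construction of the inverse map, $\nu^{-1}(\nu(\g x_0)) = \pm \g x_0$, which gives the claimed recovery formula $\g x_0 = \pm \nu^{-1}(\hat{\g v})$.

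Next I would establish uniqueness of the minimizing pair for \eqref{eq:P0}. Let $\g x^*$ be any minimizer. Since $\g x_0$ is feasible, $\|\g x^*\|_0 \le \|\g x_0\|_0$, so the sparsity bound in the hypothesis also applies to $\g x^*$. Applying Theorem \ref{thm:groupsparse} again, this time to $\g x^*$, yields that $\nu(\g x^*)$ is the unique solution to \eqref{eq:P1group}. Comparing with the first step forces $\nu(\g x^*) = \nu(\g x_0)$, hence $\g x^* = \pm \g x_0$. Conversely, both $\g x_0$ and $-\g x_0$ are feasible (the constraints $y_i = \g x^T \g Q_i \g x$ are even in $\g x$ since $\g Q_i = \g q_i \g q_i^T$) and share the same $\ell_0$ value, so both attain the minimum. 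This pins the solution set to exactly $\{\g x_0, -\g x_0\}$.

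There is no real obstacle here: existence of a minimizer of \eqref{eq:P0} is automatic because the feasible set is nonempty (it contains $\g x_0$) and $\|\cdot\|_0$ takes nonnegative integer values, so the "any minimizer $\g x^*$" step is non-vacuous. The whole argument is essentially a bookkeeping exercise on top of Theorem \ref{thm:groupsparse} together with the property $\nu^{-1}(\nu(\g x)) = \pm \g x$ recorded in Section \ref{sec:real}.
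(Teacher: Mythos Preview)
Your proposal is correct and follows essentially the same route as the paper: apply Theorem~\ref{thm:groupsparse} to $\g x_0$ to identify $\hat{\g v}=\nu(\g x_0)$, then apply it again to any minimizer $\g x^*$ (using $\|\g x^*\|_0\le\|\g x_0\|_0$) to force $\nu(\g x^*)=\nu(\g x_0)$ and hence $\g x^*=\pm\g x_0$. Your version is arguably slightly tidier in that you explicitly note the existence of a minimizer and verify that both $\g x_0$ and $-\g x_0$ attain it, whereas the paper phrases the same step as a contradiction argument.
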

\begin{proof}
Assume there exists another solution $\g x_1\neq \pm\g x_0$ to \eqref{eq:P0}, and thus with $\|\g x_1\|_0 \leq \|\g x_0\|_0$. Then, Theorem \ref{thm:groupsparse} implies that both $\nu(\g x_1)$ and $\nu(\g x_0)$ are {\em unique} solutions to \eqref{eq:P1group} and thus that $\nu(\g x_1)=  \nu(\g x_0) = \hat{\g v}$. But this contradicts the definition of the mapping $\nu$ implying $\nu(\g x_1)\neq \nu(\g x_0)$ whenever $\g x_1\neq\pm\g x_0$. Therefore the assumption $\g x_1\neq\pm\g x_0$ cannot hold and $\{\g x_0, -\g x_0\}$ is the unique pair of solutions to \eqref{eq:P0}, while $\nu^{-1}(\hat{\g v}) = \nu^{-1}(\nu(\g x_0)) = \pm\g x_0$.
\end{proof}

\section{The complex case} 
\label{sec:complex}

Consider now the problem in complex domain:
\begin{align}\label{eq:P0complex}
	\min_{\g x\in\C^n} \ & \|\g x\|_0 \\
	\mbox{s.t.}\ & y_i =  |\g q_i^H \g x|^2,\quad i=1,\dots,N, \nonumber
\end{align}
where $y_i\in \R$ and $\g q_i \in \C^n$.

The equations in the problem above are invariant to  multiplication by a unit complex scalar $z$ with $|z|=1$. Thus, there are sets of solutions of the form $T(\g x_0) = \{\g x \in \C^n : \g x = z \g x_0,\ z\in \C,\ |z| = 1\}$, and the goal is to obtain an estimate $\hat{\g x} \in T(\g x_0)$, from which $T(\g x_0)=T(\hat{\g x})$ can be inferred due to the property of the invariance set:
\begin{equation}\label{eq:Tequality}
	\forall \g x\in T(\g x_0),\quad T(\g x) = T(\g x_0) ,
\end{equation}
which can be proved as follows. Let $\g x = z\g x_0$ with $|z|=1$, then $T(\g x) = \{\g a \in \C^n : \g a = b \g x,\ b\in \C,\ |b| = 1\} =  \{\g a \in \C^n : \g a = b z \g x_0,\ b\in \C,\ |b| = 1\} = \{\g a \in \C^n : \g a = c \g x_0,\ c\in \C,\ |c| = 1\} = T(\g x_0)$.

\subsection{First level of relaxation}
\label{sec:firstrelax}

As for the real case, the linearization of the equations will use the Veronese map, which we redefine for complex vectors as follows. 

\begin{definition}[Complex Veronese map]\label{def:veronesecomplex}
The complex Veronese map $\nu : \C^n\rightarrow \C^M$ is defined by
$$
	\nu(\g x) = [x_1\overline{x_1},\ x_1\overline{x_2},\ \dots,\ x_2\overline{x_2},\ x_2\overline{x_3},\ \dots,\ x_{n-1}\overline{x_{n-1}},\ x_{n-1}\overline{x_n},\ x_n\overline{x_n} ]^T ,
$$
for which the subscript $_{ij}$, defined as in~\eqref{eq:ij}, denotes the component index such that $\left(\nu(\g x)\right)_{ij}$ equals either $x_i\overline{x_j}$ or $x_j\overline{x_i}$ (note that, for all pairs $(i,j)$, there is exactly one such component). 
\end{definition}

Note that $\nu(\g x) = \nu(\g x^\prime)$ for all $\g x^\prime \in T(\g x)$, since for $1\leq i\leq j\leq n$, $x_i^\prime\overline{x_j^\prime} = z x_j\overline{z x_j} = |z|^2 x_i\overline{x_j} = x_i\overline{x_j}$, but that $\g x^\prime \notin T(\g x)$ implies $\nu(\g x) \neq \nu(\g x^\prime)$, since $\g x^\prime \notin T(\g x)\Rightarrow |x_j^\prime| \neq |x_j| \Rightarrow x_j^\prime\overline{x_j^\prime} \neq x_j\overline{x_j}$. 

Then, we define the inverse mapping as follows. 
\begin{definition}[Inverse complex Veronese map]\label{def:nuinv}
The inverse complex Veronese map, $\nu^{-1} : \C^M\rightarrow \C^n$, is defined by 
$$
	\nu^{-1}(\g v) = \begin{cases}
		\displaystyle{\frac{1}{ \sqrt{v_{ii}} }\left[\overline{v_{1i}},\ \overline{v_{2i}},\ \dots,\ \overline{v_{ni} }\right]^T},\quad \mbox{if } i > 0 \mbox{ and } \displaystyle{\frac{|v_{ji}|^2}{ v_{ii} } = v_{jj}},\ \forall j\in\{1,\dots,n\} \\
	\g 0,\ \mbox{otherwise} ,
	\end{cases}
$$
where
$$
	i= \begin{cases}
		\displaystyle{\min_{j\in\{1,\dots,n\}} j,\ \mbox{s.t. } \Re(v_{jj} ) > 0,\ \Im(v_{jj}) = 0},\ \mbox{if } \exists j\ \mbox{such that }  \ \Re(v_{jj})> 0, \Im(v_{jj}) = 0\\ 
		 0,\quad \mbox{otherwise}  .
	\end{cases}
$$
In particular, we have $\nu^{-1}(\nu(\g x)) \in T(\g x)$. 
\end{definition}
Note that the square root acts on a real and positive number $v_{ii}$ and is thus also a real positive number. This implies $x_i = \sqrt{v_{ii}}\in\R^+$ and that we arbitrarily set $\Im(x_i) = 0$ to fix the value of $z$ in the equation $\nu^{-1}(\nu(\g x)) = z \g x$, for a complex number $z$ with $|z| = 1$, and thus make $\nu^{-1}$ injective.

With these definitions at hand, the equations in~\eqref{eq:P0complex} are reformulated via Lemma~\ref{lem:complexquadratic} (all Lemmas are given in Appendix~\ref{sec:lemmas}) as follows:
$$
	y_i =  \g x^H \g q_i \g q_i^H \g x = 2 \Re\left(\nu(\g q_i)^H\nu(\g x)\right) - \sum_{j=1}^n  \left(\nu(\g q_i)\right)_{jj}  \left(\nu(\g x)\right)_{jj}, \quad i=1,\dots,N . 
$$
Define the vectors $\g a_i\in\C^M$, $i=1,\dots,N$, with components given by $(\g a_i)_{jk} = 2\left(\nu(\g q_i)\right)_{jk}$ for $1\leq j < k \leq n$ and $(\g a_i)_{jj} = \left(\nu(\g q_i)\right)_{jj}$, $j=1\dots,n$. Then, the equations above, linear wrt. to $\nu(\g x)$, can be rewritten as $y_i = \Re\left(\g a_i^H\nu(\g x)\right)$, $i=1,\dots,N$. 

Additionally define the binary matrices $\g W_j$ such that the vector $\g W_j\nu(\g x)$ contains all the monomials of $\nu(\g x)$ including either $x_j$ or $\overline{x_j}$: 
$$
	\g W_j\nu(\g x)= [x_1 \overline{x_j},\ x_2 \overline{x_j},\ \dots,\ x_{j-1} \overline{x_j},\ x_j \overline{x_j}, \ x_j \overline{x_{j+1}},\ \dots,\ x_j \overline{x_{n}}]^T \in \C^n.
$$
Then, we have 
\begin{equation}\label{eq:sparsenucomplex}
	\|\g x\|_0 = \|\{\g W_j \nu(\g x)\}_{j=1}^n\|_0
\end{equation}
and problem~\eqref{eq:P0complex} can be rewritten as the nonlinear group-sparse optimization program
\begin{align}\label{eq:P0nucomplex}
	\min_{\g x\in\C^n} \ & \|\{\g W_j \nu(\g x)\}_{j=1}^n\|_0 \\
	\mbox{s.t.}\ & \g y = \Re\left(\g A \nu(\g x)\right) , \nonumber	
\end{align}
where $\g A = [\g a_1, \dots, \g a_N]^H$. 

Next, we relax this formulation by substituting $\g v \in\C^M$ for $\nu(\g x)$, which yields
\begin{align}\label{eq:P0vcomplex}
	\min_{\g v\in\C^M} \ & \|\{\g W_j\g v\}_{j=1}^n\|_0 \\
	\mbox{s.t.}\ & \g y = \Re\left(\g A \g v\right) \nonumber\\
	& v_{jj} \in\R^+,\ j=1,\dots,n ,\nonumber
\end{align}
where the last constraints ensure that the $v_{jj}$'s estimating the modulus of the base variables are positive real numbers. 

The following theorem shows that this relaxation can be used as a proxy to solve the original problem. 
\begin{theorem}\label{thm:grouppolycomplex}
	If the solution $\g v^*$ to \eqref{eq:P0vcomplex} is unique and yields $\g x^* = \nu^{-1}(\g v^*) \neq \g 0$ such that $y_i =  |\g q_i^H\g x^*|^2$, $i=1,\dots,N$, then $T(\g x^*)$ is the unique set of solutions of \eqref{eq:P0complex}. 
\end{theorem}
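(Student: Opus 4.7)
My plan is to mirror the proof of Theorem~\ref{thm:grouppoly}, replacing the sign-equivalence class $\{\g x, -\g x\}$ with the unit-modulus orbit $T(\g x)$ and invoking the complex counterparts of the ingredients used there. I would argue by contradiction: suppose there exists $\g x_0 \notin T(\g x^*)$ that is feasible for \eqref{eq:P0complex} with $\|\g x_0\|_0 \leq \|\g x^*\|_0$, and derive that no such $\g x_0$ can exist.

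First I would verify that $\nu(\g x_0)$ is feasible for \eqref{eq:P0vcomplex}. The linearization developed in Section~\ref{sec:firstrelax} yields $y_i = \Re(\g a_i^H \nu(\g x_0))$ for every $i$, hence $\g y = \Re(\g A \nu(\g x_0))$, while the diagonal constraint $(\nu(\g x_0))_{jj} = |x_{0j}|^2 \in \R^+$ is automatic. Combining \eqref{eq:sparsenucomplex} with the complex counterpart of \eqref{eq:sparsenu_1} (which still follows from the one-line implication $\g W_j \g v = \g 0 \Rightarrow v_{ji} = 0 \Rightarrow (\nu^{-1}(\g v))_j = 0$ applied to Definition~\ref{def:nuinv}), I obtain
$$
\|\{\g W_j \nu(\g x_0)\}_{j=1}^n\|_0 = \|\g x_0\|_0 \leq \|\g x^*\|_0 \leq \|\{\g W_j \g v^*\}_{j=1}^n\|_0 ,
$$
so uniqueness of $\g v^*$ forces $\nu(\g x_0) = \g v^*$.

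The contradiction then follows from the injectivity property noted just after Definition~\ref{def:veronesecomplex}: since $\g x_0 \notin T(\g x^*)$, there must exist an index $j$ with $|x_{0j}| \neq |x_j^*|$, giving $(\nu(\g x_0))_{jj} \neq (\nu(\g x^*))_{jj}$. The step I expect to require the most care is reconciling $(\nu(\g x^*))_{jj}$ with $v_{jj}^*$: since $\nu^{-1}$ is defined from only a subset of the entries of $\g v^*$ in Definition~\ref{def:nuinv}, I need a complex analog of Lemma~\ref{lem:nudiffreal} stating that, under the hypothesis $\g x^* = \nu^{-1}(\g v^*) \neq \g 0$, the diagonal entries of $\nu(\g x^*)$ coincide with those of $\g v^*$. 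Granted such a lemma, the equality $\nu(\g x_0) = \g v^*$ would contradict the diagonal mismatch, ruling out $\g x_0$ and leaving $T(\g x^*)$ as the unique solution set of \eqref{eq:P0complex}.
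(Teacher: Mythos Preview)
Your proposal is correct and mirrors the paper's proof essentially line by line: the paper argues by contradiction exactly as you do, invokes \eqref{eq:sparsenucomplex} together with Lemma~\ref{lem:sparsitynuinv} (your ``complex counterpart of \eqref{eq:sparsenu_1}'') to force $\nu(\g x_0)=\g v^*$, and then uses Lemma~\ref{lem:nudiff} (your anticipated ``complex analog of Lemma~\ref{lem:nudiffreal}'') to turn the diagonal mismatch $(\nu(\g x_0))_{jj}\neq(\nu(\nu^{-1}(\g v^*)))_{jj}$ into the contradiction $\nu(\g x_0)\neq\g v^*$.
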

\begin{proof}
Assume there is an $\g x_0\neq T(\g x^*)$ satisfying the constraints of \eqref{eq:P0complex} and at least as sparse as $\g x^*$. Then 
$\g y = \Re\left(\g A \nu(\g x_0)\right)$, 
and, by using~\eqref{eq:sparsenucomplex} and Lemma~\ref{lem:sparsitynuinv}, 
$$
	\|\{\g W_j \nu(\g x_0)\}_{j=1}^n\|_0 = \|\g x_0\|_0 \leq \|\g x^*\|_0\leq \|\{\g W_j\g v^*\}_{j=1}^n\|_0 ,
$$
which contradicts the fact that $\g v^*$ is the unique solution to \eqref{eq:P0vcomplex} unless $\nu(\g x_0)=\g v^*$. But since $\g x_0 \notin T(\g x^*)$, we have $|x_{0j}| \neq |x_j^*|$ and thus $x_{0j}\overline{x_{0j}} \neq x_j^*\overline{x_j^*}$ for some $j\in\{1,\dots,n\}$, which implies $\left(\nu(\g x_0)\right)_{jj}\neq \left(\nu(\g x^*)\right)_{jj} = \left(\nu(\nu^{-1}(\g v^*))\right)_{jj}$. 
Therefore, by using Lemma~\ref{lem:nudiff} with the assumption $\g x^*= \nu^{-1}(\g v^*) \neq \g 0$, there cannot be such an $\g x_0$.
\end{proof}

\subsection{Convex relaxation}

We now introduce a second level of relaxation by replacing the $\ell_0$-pseudo norm by a block-$\ell_1$ norm. 
This leads to a convex relaxation in the form of a SOCP:
\begin{align}\label{eq:P1vcomplex}
	\min_{\g v\in\C^M} \ & \sum_{j=1}^n \|\g W_j^R \Re( \g v) + \i \g W_j^I \Im( \g v)\|\\
	\mbox{s.t.}\ & \g y = \Re\left(\g A\g v\right)  \nonumber\\
	& v_{jj} \in\R^+,\ j=1,\dots,n ,\nonumber
\end{align}
where $\g W_j^R = \g W_j \g W^R$ and $\g W_j^I = \g W_j \g W^I$ with $\tilde{ \g W} =  \begin{pmatrix}\g W^R & \g 0\\\g 0 & \g W^I\end{pmatrix}$ a diagonal matrix of precompensating weights given by $( \g W^R)_{i,i}= \|\Re(\g A_i)\|_2$ and $(\g W^I)_{i,i}= \|\Im(\g A_i)\|_2$.

\begin{theorem}\label{thm:groupsparsecomplex}
Let $\g x_0$ be such that $y_i = \g x_0^H\g q_i \g q_i^H\g x_0$, $i=1,\dots,N$, $\g v_0=\nu(\g x_0)$ and $\tilde{\g A} = [\Re(\g A),\ -\Im(\g A)]$. If the condition
$$
	 \|\g x_0\|_0 <  \frac{1}{2\sqrt{2n}}\sqrt{1+ \frac{1}{\mu^2(\tilde{\g A})}} 
$$
holds, then $\g v_0$ is the unique solution to \eqref{eq:P1vcomplex}.
\end{theorem}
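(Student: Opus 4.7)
The plan is to reduce the complex program \eqref{eq:P1vcomplex} to a real-valued block-sparse recovery problem and then replay the mutual coherence argument of Theorem~\ref{thm:groupsparse}. Introduce the stacked real variable $\tilde{\g v} = [\Re(\g v)^T,\ \Im(\g v)^T]^T \in \R^{2M}$. The affine constraint $\g y=\Re(\g A\g v)=\Re(\g A)\Re(\g v)-\Im(\g A)\Im(\g v)$ is precisely $\g y = \tilde{\g A}\tilde{\g v}$, and the conditions $v_{jj}\in\R^+$ are a finite set of linear (in)equalities on $\tilde{\g v}$. For the objective, the identity $\|\g a+\i\g b\|^2=\|\g a\|_2^2+\|\g b\|_2^2$ for real $\g a,\g b$ lets one rewrite the $j$th summand as $\|[\,\g W_j\g W^R\Re(\g v)\,;\ \g W_j\g W^I\Im(\g v)\,]\|_2 = \|\g B_j\tilde{\g W}\tilde{\g v}\|_2$, where $\g B_j$ is an appropriate $2n\times 2M$ binary selector. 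Thus \eqref{eq:P1vcomplex} is exactly a block-$\ell_1$ program over $\R^{2M}$ with sensing matrix $\tilde{\g A}$, diagonal scaling $\tilde{\g W}$, and blocks of size $2n$.

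The weights are chosen precisely so that the columns of $\tilde{\g A}\tilde{\g W}^{-1}$ have unit $\ell_2$-norm; since mutual coherence is invariant under positive column scaling, $\mu(\tilde{\g A})$ is the relevant quantity. The target $\tilde{\g v}_0$ corresponding to $\g v_0=\nu(\g x_0)$ is block-sparse with at most $s:=\|\g x_0\|_0$ nonzero blocks, because $x_{0j}=0$ forces every monomial containing $x_{0j}$ or $\overline{x_{0j}}$ to vanish, hence $\g W_j\g v_0=\g 0$ and therefore $\g B_j\tilde{\g W}\tilde{\g v}_0=\g 0$.

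Given this reformulation, I would invoke the same chain of inequalities as in Appendix~\ref{proof:groupsparse}: write any feasible competitor as $\tilde{\g v}_0+\tilde{\g h}$ with $\tilde{\g A}\tilde{\g h}=\g 0$; split $\tilde{\g h}$ into components $\tilde{\g h}_S,\tilde{\g h}_{S^c}$ on the block-support $S$ of $\tilde{\g v}_0$ and its complement; expand $0=\|\tilde{\g A}\tilde{\g W}^{-1}\tilde{\g W}\tilde{\g h}\|_2^2$ and bound the off-diagonal Gram entries by $\mu(\tilde{\g A})$, yielding a quadratic inequality relating $\|\tilde{\g W}\tilde{\g h}_S\|_2$ and $\|\tilde{\g W}\tilde{\g h}_{S^c}\|_2$; finally apply Cauchy--Schwarz on the $s$ nonzero blocks of size $2n$ to pass from $\ell_2$ to block-$\ell_1$ norms, which is the step that produces the $\sqrt{2n}$ factor. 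Putting the pieces together and using the reverse triangle inequality on the block-$\ell_1$ objective shows that the threshold $\|\g x_0\|_0 < \frac{1}{2\sqrt{2n}}\sqrt{1+1/\mu^2(\tilde{\g A})}$ forces the objective to strictly increase for every $\tilde{\g h}\neq \g 0$, hence $\tilde{\g v}_0$ is the unique minimizer.

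The main obstacle is purely one of bookkeeping: one must verify that each step of the real proof survives the complexification, in particular that the constraints $v_{jj}\in\R^+$ and the separate real/imaginary precompensation built into $\tilde{\g W}$ do not introduce cross-terms that spoil the mutual coherence estimate. Because $\tilde{\g W}$ is diagonal and positive, the positivity constraints only shrink the set of admissible perturbations $\tilde{\g h}$, and the column-normalization needed to make $\mu(\tilde{\g A})$ appear is exactly the definition of $\g W^R,\g W^I$, so no new inequality is required beyond what already drives the real-case argument.
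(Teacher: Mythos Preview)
Your proposal is correct and follows essentially the same route as the paper: stack real and imaginary parts, work with $\tilde{\g A}$ and the diagonal scaling $\tilde{\g W}$, and replay the argument of Theorem~\ref{thm:groupsparse} with groups of size $2n$ instead of $n$. The paper packages the key null-space estimate as Lemma~\ref{lem:deltacomplex} (itself just Lemma~\ref{lem:bounddelta2} applied to $\tilde{\g A}\in\R^{N\times 2M}$); note that the factor $\sqrt{2n}$ arises from summing that per-entry bound over the $n$ complex---equivalently $2n$ real---coordinates in each group $G_j$, not from a Cauchy--Schwarz step across the $s$ active blocks as your sketch suggests.
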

\begin{proof}
The vector $\g v_0$ is the unique solution to \eqref{eq:P1vcomplex} if  the inequality
$$
	\sum_{j=1}^n \|\g W_j^R \Re(\g v_0 + \g \delta) + \i \g W_j^I \Im(\g v_0 + \g \delta)\| > \sum_{j=1}^n \|\g W_j^R \Re( \g v_0) + \i \g W_j^I \Im( \g v_0)\|
$$
holds for all $\g \delta\in\C^M$ such that $\Re(\g A(\g v_0 + \g \delta) ) = \g y$, which implies the constraint $ \Re\left(\g A\g \delta\right) = \g 0$ on $\g\delta$.
The inequality above can be rewritten as
$$
	\sum_{j\in I_0} \|\g W_j^R \Re( \g \delta) + \i \g W_j^I \Im(\g \delta)\| + \sum_{j\notin I_0} \|\g W_j^R \Re(\g v_0 + \g \delta) + \i \g W_j^I \Im(\g v_0 + \g \delta)\| -  \|\g W_j^R \Re( \g v_0) + \i \g W_j^I \Im( \g v_0)\| > 0 ,
$$
where $I_0=\{j\in\{1,\dots,n\} : \g W_j^R \Re(\g v_0) = \g 0 \ \wedge\ \g W_j^I \Im(\g v_0) = \g 0\}$.  
By the triangle inequality, $\|\g a + \g b\| - \|\g a\| \geq -\|\g b\|$ with $\g a = \g W_j^R \Re( \g v_0) + \i \g W_j^I \Im( \g v_0)$, this condition is met if
$$
	\sum_{j\in I_0} \|\g W_j^R \Re( \g \delta) + \i \g W_j^I \Im(\g \delta)\|  -\sum_{j\notin I_0} \|\g W_j^R \Re(\g \delta) + \i \g W_j^I \Im(\g \delta)\| > 0
$$
or
\begin{equation}\label{eq:proof1cplx}
	\sum_{j=1}^n \|\g W_j^R \Re( \g \delta) + \i \g W_j^I \Im(\g \delta)\|  - 2\sum_{j\notin I_0} \|\g W_j^R \Re(\g \delta) + \i \g W_j^I \Im(\g \delta)\| > 0 .
\end{equation}

By defining $G_j$ as the set of indexes corresponding to nonzero columns of $\g W_j$, Lemma~\ref{lem:deltacomplex} yields
\begin{align*}
	\|\g W_j^R \Re( \g \delta) + \i \g W_j^I \Im(\g \delta)\|^2 &= \sum_{i\in G_j} (w_i^R)^2 \Re(\delta_i)^2 + (w_i^I)^2 \Im(\delta_i)^2   
	\\& \leq 2 n\frac{\mu^2(\tilde{\g A})}{1+\mu^2(\tilde{\g A})} \|\g W^R \Re(\g \delta) + \i \g W^I \Im(\g \delta) \|^2  .
\end{align*}
Due to the fact that $\bigcup_{k\in\{1,\dots,n\}} G_k = \{1,\dots,M\}$, we also have
\begin{align*}
	\|\g W^R \Re(\g \delta) + \i \g W^I \Im(\g \delta) \|^2  &= \sum_{i=1}^M (w_i^R)^2 \Re(\delta_i)^2 + (w_i^I)^2 \Im(\delta_i)^2 \\
	&\leq  \sum_{k=1}^n \sum_{i\in G_k} (w_i^R)^2 \Re(\delta_i)^2 + (w_i^I)^2 \Im(\delta_i)^2 \\
	&= \sum_{k=1}^n \|\g W_k^R \Re( \g \delta) + \i \g W_k^I \Im(\g \delta)\|^2 \\
	& \leq \left(\sum_{k=1}^n \|\g W_k^R \Re( \g \delta) + \i \g W_k^I \Im(\g \delta)\| \right)^2 ,
\end{align*}
which then leads to
\begin{align*}
	\|\g W_j^R \Re( \g \delta) + \i \g W_j^I \Im(\g \delta)\|^2
	& \leq 2 n\frac{\mu^2(\tilde{\g A})}{1+\mu^2(\tilde{\g A})} \left(\sum_{k=1}^n \|\g W_k^R \Re( \g \delta) + \i \g W_k^I \Im(\g \delta)\| \right)^2.
\end{align*}
Introducing this result in \eqref{eq:proof1cplx} gives the condition
$$
	\sum_{j=1}^n \|\g W_j^R \Re( \g \delta) + \i \g W_j^I \Im(\g \delta)\|  - 2 (n - |I_0|) \frac{\mu(\tilde{\g A}) \sqrt{2n}}{\sqrt{1+\mu^2(\tilde{\g A})}} \sum_{k=1}^n \|\g W_k^R \Re( \g \delta) + \i \g W_k^I \Im(\g \delta)\|  > 0 .
$$
Finally, given that $|I_0| = n - \|\g x_0\|_0$, this yields
\begin{equation}\label{eq:conduniqueP1}
	\sum_{j=1}^n \|\g W_j^R \Re( \g \delta) + \i \g W_j^I \Im(\g \delta)\|  - 2\|\g x_0\|_0 \frac{\mu(\tilde{\g A}) \sqrt{2n}}{\sqrt{1+\mu^2(\tilde{\g A})}} \sum_{k=1}^n \|\g W_k^R \Re( \g \delta) + \i \g W_k^I \Im(\g \delta)\|  > 0 .
\end{equation}
or, for $\g\delta\neq \g 0$,
$$
	 \|\g x_0\|_0 < \frac{\sqrt{1+\mu^2(\tilde{\g A})}}{2\mu(\tilde{\g A}) \sqrt{2n} } ,
$$
which can be rewritten as in the statement of the Theorem. 
\end{proof}

\begin{corollary}\label{cor:uniquenesscomplex}
Let $\g x_0$ be a feasible point of \eqref{eq:P0complex}. If the condition
$$
	 \|\g x_0\|_0 <  \frac{1}{2\sqrt{2n}}\sqrt{1+ \frac{1}{\mu^2(\tilde{\g A})}} 
$$
holds, then $T(\g x_0)$ is the unique set of solutions to the minimization problem~\eqref{eq:P0complex} and it can be computed as $T(\g x_0) = T(\nu^{-1}(\hat{\g v}))$ with $\hat{\g v}$ the solution to \eqref{eq:P1vcomplex}.
\end{corollary}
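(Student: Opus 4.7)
The plan is to mimic the structure of the proof of Corollary~\ref{cor:uniqueness} in the real case, replacing the $\pm$ sign equivalence by the global phase equivalence $T(\cdot)$.

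First I would argue by contradiction: suppose there exists $\g x_1$ feasible for \eqref{eq:P0complex} with $\g x_1 \notin T(\g x_0)$ and $\|\g x_1\|_0 \leq \|\g x_0\|_0$. Since both $\g x_0$ and $\g x_1$ satisfy the quadratic measurement equations and are below the sparsity threshold of Theorem~\ref{thm:groupsparsecomplex}, applying that theorem to each of them yields that \emph{both} $\nu(\g x_0)$ and $\nu(\g x_1)$ are \emph{unique} solutions to the convex program~\eqref{eq:P1vcomplex}. By uniqueness we must then have $\nu(\g x_0) = \nu(\g x_1)$.

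Next I would use the defining property of the complex Veronese map established in Section~\ref{sec:firstrelax}: if $\g x_1 \notin T(\g x_0)$, then $|x_{1j}| \neq |x_{0j}|$ for some $j$, hence $\left(\nu(\g x_1)\right)_{jj} = x_{1j}\overline{x_{1j}} \neq x_{0j}\overline{x_{0j}} = \left(\nu(\g x_0)\right)_{jj}$, contradicting the previous equality. Therefore no such $\g x_1$ exists, which means every feasible point of \eqref{eq:P0complex} at least as sparse as $\g x_0$ must lie in $T(\g x_0)$. Combined with the invariance property~\eqref{eq:Tequality}, this establishes that $T(\g x_0)$ is the unique solution set.

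Finally, to obtain the explicit reconstruction formula, I would note that by Theorem~\ref{thm:groupsparsecomplex} the unique minimizer of \eqref{eq:P1vcomplex} is $\hat{\g v} = \nu(\g x_0)$, so by Definition~\ref{def:nuinv} we have $\nu^{-1}(\hat{\g v}) = \nu^{-1}(\nu(\g x_0)) \in T(\g x_0)$, and \eqref{eq:Tequality} then gives $T(\nu^{-1}(\hat{\g v})) = T(\g x_0)$.

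There is no real obstacle here: Theorem~\ref{thm:groupsparsecomplex} has already absorbed all the hard work (the coherence-based uniqueness argument for the SOCP), and the only subtlety is to carefully invoke the right equivalence relation ($T(\cdot)$ rather than $\pm$) when translating between injectivity of $\nu$ and uniqueness of the phase-retrieval solution. The main thing to be careful about is making sure the sparsity hypothesis on $\g x_0$ also buys uniqueness of the SOCP minimizer when applied to the hypothetical competitor $\g x_1$, which is fine because $\|\g x_1\|_0 \leq \|\g x_0\|_0$ keeps $\g x_1$ below the same threshold.
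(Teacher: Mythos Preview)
Your proposal is correct and follows essentially the same argument as the paper's own proof: assume a competing solution $\g x_1\notin T(\g x_0)$, use Theorem~\ref{thm:groupsparsecomplex} on both $\g x_0$ and $\g x_1$ to force $\nu(\g x_0)=\nu(\g x_1)=\hat{\g v}$, contradict this via the injectivity of $\nu$ modulo $T(\cdot)$, and finish with $\nu^{-1}(\hat{\g v})\in T(\g x_0)$ together with~\eqref{eq:Tequality}. The only cosmetic difference is that the paper cites Definition~\ref{def:veronesecomplex} directly for the injectivity step, whereas you spell out the $jj$-component argument; both are the same observation recorded right after Definition~\ref{def:veronesecomplex}.
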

\begin{proof}
Assume there exists another solution $\g x_1\notin T(\g x_0)$ to \eqref{eq:P0complex}, and thus with $\|\g x_1\|_0 \leq \|\g x_0\|_0$. Then, Theorem \ref{thm:groupsparsecomplex} implies that both $\nu(\g x_1)$ and $\nu(\g x_0)$ are {\em unique} solutions to \eqref{eq:P1vcomplex} and thus that $\nu(\g x_1)=  \nu(\g x_0) = \hat{\g v}$. But this contradicts Definition~\ref{def:veronesecomplex} implying $\nu(\g x_1)\neq \nu(\g x_0)$ whenever $\g x_1\notin T(\g x_0)$. Therefore the assumption $\g x_1\notin T(\g x_0)$ cannot hold and $T(\g x_0)$ is the unique set of solutions to \eqref{eq:P0complex}, while $\nu^{-1}(\hat{\g v}) = \nu^{-1}(\nu(\g x_0)) \in T(\g x_0)$, which, by using~\eqref{eq:Tequality}, implies $T(\g x_0) = T(\nu^{-1}(\hat{\g v}))$.
\end{proof}

\section{Stable recovery in the presence of noise}
\label{sec:noise}

Consider now the case where the measurements $\g y$ are perturbed by an additive noise $\g e\in\R^N$ of bounded $\ell_2$-norm, $\|\g e\|_2 \leq \varepsilon$. Then, the equations in~\eqref{eq:P0} and~\eqref{eq:P0complex} are of the form $y_i =  |\g q_i^H \g x|^2 + e_i$ with the noise terms $e_i$ to be estimated together with the sparse signal. Note that in this context, multiple solutions with different noise vectors can be valid. Thus, we aim at stability results bounding the error on the estimates by a function of $\varepsilon$ rather exact recovery ones. 
Details on the proposed method to achieve these goals are given below, first for real signals and then for complex ones. 

\subsection{Stability in the real case}

In the noisy case with real data, the problem that we need to solve becomes
\begin{align}\label{eq:P0noisereal}
	\min_{\g x\in\R^n, \g e\in\R^N} \ & \|\g x\|_0 \\
	\mbox{s.t.}\ & y_i =  |\g q_i^T \g x|^2 + e_i,\quad i=1,\dots,N, \nonumber\\
	& \|\g e\|_2 \leq \varepsilon, \nonumber
\end{align}
where $y_i\in \R$ and $\g q_i \in \R^n$.
Following the approach of Sect.~\ref{sec:real} leads to a convex relaxation in the form of the SOCP 
\begin{align}\label{eq:P1noisereal}
	\min_{\g v\in\R^M} \ & \sum_{j=1}^n  \|\g W_j \g W \g v\|_2\\
	\mbox{s.t.}\ & \| \g y - \g A\g v \|_2 \leq \varepsilon \nonumber\\
	& v_{jj} \geq 0,\ j=1,\dots,n ,\nonumber
\end{align}
for which we have the following stability result. 
\begin{theorem}\label{thm:stabilitygroupreal}
	Let $(\g x_0, \g e_0)$ denote a solution to \eqref{eq:P0noisereal}. If the inequality 
	\begin{equation}\label{eq:conditionP1groupnoisereal}
		\|\g x_0\|_0 < \frac{1}{2n^2(n+1)}\left(1+ \frac{1}{\mu(\g A)}\right) 
	\end{equation}
	holds, then  the solution $\hat{\g v}$ to 
	\eqref{eq:P1noisereal} must obey
	\begin{equation}\label{eq:stabilityxgroupWreal}
		\|\g W (\hat{\g v} - \nu(\g x_0)) \|_2^2 \leq \frac{4n \varepsilon^2}{1 - \mu(\g A)[2n^2(n+1)\|\g x_0\|_0 - 1]} .
	\end{equation}
	If, in addition, $\varepsilon = 0$, then $\hat{\g x} =\pm \g x_0$.
\end{theorem}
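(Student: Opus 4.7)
I would follow the standard mutual-coherence stability template, adapted to the group-sparse setting used in Theorem \ref{thm:groupsparse}. Set $\g v_0 = \nu(\g x_0)$ and $\g \delta = \hat{\g v} - \g v_0$. Since $(\g x_0, \g e_0)$ is feasible for \eqref{eq:P0noisereal} with $\|\g e_0\|_2 \leq \varepsilon$, the point $\g v_0$ is feasible for \eqref{eq:P1noisereal}; combined with the feasibility of $\hat{\g v}$, the triangle inequality gives $\|\g A \g \delta\|_2 \leq 2\varepsilon$. Let $I_0 = \{j : \g W_j \g W \g v_0 = \g 0\}$, so $|I_0^c| = \|\g x_0\|_0$. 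The optimality of $\hat{\g v}$, combined with the block triangle inequality exactly as in the proofs of Theorems \ref{thm:groupsparse} and \ref{thm:groupsparsecomplex}, produces the cone-type constraint
$$\sum_{j \in I_0} \|\g W_j \g W \g \delta\|_2 \leq \sum_{j \notin I_0} \|\g W_j \g W \g \delta\|_2 .$$

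Next I would expand $\|\g A \g \delta\|_2^2 = \sum_{i,l} (\g A_i^T \g A_l) \delta_i \delta_l$, separating diagonal from off-diagonal terms and controlling the latter by $\mu(\g A) w_i w_l |\delta_i| |\delta_l|$, to obtain the standard coherence inequality
$$(1 + \mu(\g A)) \|\g W \g \delta\|_2^2 \leq \|\g A \g \delta\|_2^2 + \mu(\g A) \|\g W \g \delta\|_1^2 .$$
The heart of the proof is then to bound $\|\g W \g \delta\|_1^2$ by a constant multiple of $\|\g x_0\|_0 \cdot \|\g W \g \delta\|_2^2$. For this I would chain: (i) the overlap estimate $\|\g W \g \delta\|_1 \leq \sum_j \|\g W_j \g W \g \delta\|_1 \leq \sqrt{n} \sum_j \|\g W_j \g W \g \delta\|_2$, since each monomial index belongs to at most two groups $G_j$ and each group vector has $n$ entries; (ii) the cone doubling $\sum_j \|\g W_j \g W \g \delta\|_2 \leq 2 \sum_{j \notin I_0} \|\g W_j \g W \g \delta\|_2$; (iii) Cauchy--Schwarz on the $\|\g x_0\|_0$ active groups; and (iv) the dual overlap $\sum_j \|\g W_j \g W \g \delta\|_2^2 \leq 2 \|\g W \g \delta\|_2^2$. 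Substituting back with $\|\g A \g \delta\|_2^2 \leq 4\varepsilon^2$ and invoking the hypothesis \eqref{eq:conditionP1groupnoisereal}, the coefficient of $\|\g W \g \delta\|_2^2$ on the left stays positive and rearrangement yields \eqref{eq:stabilityxgroupWreal}.

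The noiseless statement then follows at once by specialising the bound to $\varepsilon = 0$: this forces $\g W \g \delta = \g 0$, and since the weights $w_i = \|\g A_i\|_2$ are strictly positive we conclude $\g \delta = \g 0$, so $\hat{\g v} = \nu(\g x_0)$ and $\hat{\g x} = \nu^{-1}(\hat{\g v}) = \pm \g x_0$ by construction of the inverse Veronese map.

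The main obstacle is the careful bookkeeping through the $\ell_1 \to$ block-$\ell_2 \to \ell_2$ chain: the order in which Cauchy--Schwarz, cone doubling, and the overlap estimates are applied determines the exact polynomial in $n$ that shows up as $2n^2(n+1)\|\g x_0\|_0$ in the denominator and as the extra factor $n$ in the numerator $4n\varepsilon^2$ of \eqref{eq:stabilityxgroupWreal}. The routine inequalities are standard, but organising them to reproduce the precise constants stated in the theorem, rather than a looser polynomial bound, is the delicate step of the argument.
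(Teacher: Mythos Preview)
Your approach is correct and actually \emph{sharper} than the paper's, but it does not reproduce the stated constants --- it beats them. Running your chain (i)--(iv) gives
\[
\|\g W\g\delta\|_1 \;\leq\; \sqrt{n}\sum_j\|\g W_j\g W\g\delta\|_2 \;\leq\; 2\sqrt{n}\sum_{j\notin I_0}\|\g W_j\g W\g\delta\|_2 \;\leq\; 2\sqrt{2n\|\g x_0\|_0}\,\|\g W\g\delta\|_2,
\]
hence $\|\g W\g\delta\|_1^2 \leq 8n\|\g x_0\|_0\,\|\g W\g\delta\|_2^2$, which plugged into the coherence inequality yields
\[
\|\g W\g\delta\|_2^2 \;\leq\; \frac{4\varepsilon^2}{1-\mu(\g A)\bigl[8n\|\g x_0\|_0-1\bigr]}.
\]
This is strictly tighter than \eqref{eq:stabilityxgroupWreal} (smaller numerator, larger denominator under the hypothesis \eqref{eq:conditionP1groupnoisereal}), so the theorem follows \emph{a fortiori}. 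Your closing remark that ``organising them to reproduce the precise constants\dots is the delicate step'' is therefore misdirected: there is no bookkeeping that turns your chain into $2n^2(n+1)$ and $4n$.

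The paper's route (spelled out for the complex analogue, Theorem~\ref{thm:stabilitygroup}) is different and looser. Instead of exploiting the overlap-$2$ structure, it bounds $\|\g W\g\delta\|_1^2 \leq M\|\g W\g\delta\|_2^2 \leq M\bigl(\sum_j\|\g W_j\g W\g\delta\|_2\bigr)^2$ with $M=n(n+1)/2$, lower-bounds $\|\g W\g\delta\|_2^2 \geq \frac{1}{n}\sum_j\|\g W_j\g W\g\delta\|_2^2$, and then introduces a reparametrisation $a=\sum_{j\in I_0}\|\cdot\|$, $b=\sum_{j\notin I_0}\|\cdot\|$, $\rho=a/b$, $\gamma=(1+\rho)^2/(c_0\rho^2+c_1)$ to extract the factor $\gamma\leq 4\|\g x_0\|_0$. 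The product $4nM=2n^2(n+1)$ and the extra $n$ in the numerator come from those two crude bounds. What the paper's parametrisation buys is that it never uses the overlap bound $\sum_j\|\g W_j\g W\g\delta\|_2^2 \leq 2\|\g W\g\delta\|_2^2$; what your approach buys is a cleaner argument and a polynomially better constant by using that each monomial index lies in at most two groups.
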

We omit the proof which is similar to the one of Theorem~6 in \cite{Lauer13b}, except for the last statement concluding on the stability of $\hat{\g x}$, and which closely follows the one for the complex case of Theorem~\ref{thm:stabilitygroup} to be detailed below.
 
\subsection{Stability in the complex case}

Consider now the complex variant of the problem perturbed by noise:
\begin{align}\label{eq:P0noise}
	\min_{\g x\in\C^n, \g e\in\R^N} \ & \|\g x\|_0 \\
	\mbox{s.t.}\ & y_i =  |\g q_i^H \g x|^2 + e_i,\quad i=1,\dots,N, \nonumber\\
	& \|\g e\|_2 \leq \varepsilon, \nonumber
\end{align}
where $y_i\in \R$ and $\g q_i \in \C^n$.
The solution to this problem can be approached via the convex relaxation
\begin{align}\label{eq:P1noise}
	\min_{\g v\in\C^M} \ & \sum_{j=1}^n  \|\g W_j^R \Re( \g v) + \i \g W_j^I \Im( \g v)\|\\
	\mbox{s.t.}\ & \| \g y - \Re\left(\g A\g v\right) \|_2 \leq \varepsilon \nonumber\\
	& v_{jj} \in\R^+,\ j=1,\dots,n .\nonumber
\end{align}
As for the real case, we have a stability result for the estimation of $\nu(\g x)$. 

\begin{theorem}\label{thm:stabilitygroup}
	Let $(\g x_0, \g e_0)$ denote a solution to \eqref{eq:P0noise}. If the inequality 
	\begin{equation}\label{eq:conditionP1groupnoise}
		\|\g x_0\|_0 < \frac{1}{2n^2(n+1)}\left(1+ \frac{1}{\mu(\tilde{\g A})}\right) 
	\end{equation}
	holds, then  the solution $\hat{\g v}$ to 
	\eqref{eq:P1noise} must obey
	\begin{equation}\label{eq:stabilityxgroupW}
		\|\g W^R \Re(\hat{\g v} - \nu(\g x_0)) + \i \g W^I \Im(\hat{\g v} - \nu(\g x_0))\|^2 \leq \frac{4n \varepsilon^2}{1 - \mu(\tilde{\g A})[2n^2(n+1)\|\g x_0\|_0 - 1]} .
	\end{equation}
	If, in addition, $\varepsilon = 0$, then $\hat{\g x} = \nu^{-1}(\hat{\g v}) \in T(\g x_0)$.
\end{theorem}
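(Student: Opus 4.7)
The plan is to adapt the proof of Theorem \ref{thm:groupsparsecomplex} by tracking the perturbation $\g\delta = \hat{\g v} - \nu(\g x_0)$ through an inequality chain where the exact constraint $\Re(\g A\g\delta) = \g 0$ is replaced by an $\ell_2$-slack. Since $\nu(\g x_0)$ is feasible for \eqref{eq:P1noise} (the linearization identity of Section \ref{sec:firstrelax} gives $\g y - \Re(\g A\nu(\g x_0)) = \g e_0$ with $\|\g e_0\|_2 \leq \varepsilon$), the feasibility of $\hat{\g v}$ and the triangle inequality yield
$$\|\Re(\g A\g\delta)\|_2 = \left\|\tilde{\g A}\begin{pmatrix}\Re(\g\delta)\\ \Im(\g\delta)\end{pmatrix}\right\|_2 \leq 2\varepsilon,$$
which exposes the real matrix $\tilde{\g A}$ whose mutual coherence drives the bound.

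First I would use the optimality of $\hat{\g v}$: the block-$\ell_1$-type objective at $\hat{\g v}$ is no larger than at $\nu(\g x_0)$. Splitting both sums with respect to $I_0 = \{j : \g W_j^R \Re(\nu(\g x_0)) = \g 0 \ \wedge\ \g W_j^I \Im(\nu(\g x_0)) = \g 0\}$ and applying the reverse triangle inequality block-wise on $I_0^c$ gives a cone-type condition
$$\sum_{j\notin I_0} \|\g W_j^R \Re(\g \delta) + \i \g W_j^I \Im(\g \delta)\| \geq \sum_{j\in I_0} \|\g W_j^R \Re(\g \delta) + \i \g W_j^I \Im(\g \delta)\|,$$
exactly analogous to the one derived in the noiseless case.

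Next I would appeal to a noise-aware variant of Lemma \ref{lem:deltacomplex}, bounding each block-norm $\|\g W_j^R \Re(\g\delta) + \i \g W_j^I \Im(\g\delta)\|^2$ by a multiple of the global weighted norm $\|\g W^R \Re(\g\delta) + \i \g W^I \Im(\g\delta)\|^2$ plus a residual of order $\|\Re(\g A\g\delta)\|_2^2 \leq 4\varepsilon^2$, since the noiseless version exploited $\Re(\g A\g\delta) = \g 0$. Summing over $j$, chaining this through the cone condition and invoking hypothesis \eqref{eq:conditionP1groupnoise} on $\|\g x_0\|_0$, I would arrive at a quadratic inequality of the shape
$$\left(1 - \mu(\tilde{\g A})\bigl[2n^2(n+1)\|\g x_0\|_0 - 1\bigr]\right) \|\g W^R \Re(\g\delta) + \i \g W^I \Im(\g\delta)\|^2 \leq 4n\varepsilon^2,$$
which rearranges into \eqref{eq:stabilityxgroupW}.

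Finally, for $\varepsilon = 0$ the right-hand side of \eqref{eq:stabilityxgroupW} vanishes and the weighted norm is forced to zero; the precompensating weights $w_i^R$, $w_i^I$ are strictly positive on the columns of $\g A$ that carry any information about coordinate $i$, so this collapses to $\hat{\g v} = \nu(\g x_0)$, and Definition \ref{def:nuinv} then yields $\hat{\g x} = \nu^{-1}(\hat{\g v}) = \nu^{-1}(\nu(\g x_0)) \in T(\g x_0)$. The main technical obstacle will be establishing the perturbed block-norm inequality, since the additive $\|\Re(\g A\g\delta)\|_2^2$ slack must be absorbed without destroying the coherence scaling; the resulting looser condition $2n^2(n+1)\|\g x_0\|_0 < 1 + 1/\mu(\tilde{\g A})$, in place of the tighter $2\sqrt{2n}\,\mu(\tilde{\g A})\|\g x_0\|_0 < \sqrt{1 + \mu^2(\tilde{\g A})}$ of the noiseless theorem, reflects exactly this loss.
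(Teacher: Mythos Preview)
Your overall architecture is right: feasibility of $\nu(\g x_0)$ gives $\|\Re(\g A\g\delta)\|_2 \leq 2\varepsilon$, optimality of $\hat{\g v}$ gives the cone condition on block norms, and the $\varepsilon=0$ conclusion is correct. But the central technical step---how you pass from the $2\varepsilon$ slack to the stated bound---differs from the paper and is underspecified in your sketch.

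You propose a ``noise-aware variant of Lemma~\ref{lem:deltacomplex}'' bounding each block norm by a coherence-scaled global norm plus an additive $O(\varepsilon^2)$ residual. But Lemma~\ref{lem:deltacomplex} rests on Lemma~\ref{lem:bounddelta2}, whose proof genuinely requires $\tilde{\g\delta}\in\mathrm{Ker}(\tilde{\g A})$; you do not explain how to perturb it, and even granting such a lemma, the route from per-block bounds to the exact constant $2n^2(n+1)=4nM$ is not visible. The paper takes a different and cleaner path: it never invokes Lemma~\ref{lem:deltacomplex} here. Instead it expands the residual directly through the normalized Gram matrix,
\[
\|\Re(\g A\g\delta)\|_2^2 = (\tilde{\g W}\tilde{\g\delta})^T\bigl(\tilde{\g W}^{-1}\tilde{\g A}^T\tilde{\g A}\tilde{\g W}^{-1}\bigr)(\tilde{\g W}\tilde{\g\delta}),
\]
and uses that this matrix has unit diagonal and off-diagonal entries bounded by $\mu(\tilde{\g A})$ to obtain
\[
4\varepsilon^2 \geq (1+\mu(\tilde{\g A}))\,\|\tilde{\g W}\tilde{\g\delta}\|_2^2 - \mu(\tilde{\g A})\,\|\tilde{\g W}\tilde{\g\delta}\|_1^2 .
\]
This introduces $\mu$ \emph{linearly}, not as $\mu^2/(1+\mu^2)$; the looser condition you noticed is a consequence of this Gram-matrix inequality, not of ``absorbing slack.'' The paper then bounds $\|\tilde{\g W}\tilde{\g\delta}\|_1^2 \leq M\bigl(\sum_j\|\g W_j^R\Re(\g\delta)+\i\g W_j^I\Im(\g\delta)\|\bigr)^2$ and $\|\tilde{\g W}\tilde{\g\delta}\|_2^2 \geq \tfrac{1}{n}\sum_j\|\cdot\|^2$, and---this is the part your sketch misses entirely---introduces auxiliary scalars $a=\sum_{j\in I_0}\|\cdot\|$, $b=\sum_{j\notin I_0}\|\cdot\|$, $\rho=a/b$, and norm-ratio constants $c_0,c_1$, then shows that the cone condition $\rho<1$ forces the combined factor $\gamma=(1+\rho)^2/(c_0\rho^2+c_1)$ to satisfy $\gamma\leq 4\|\g x_0\|_0$. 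That substitution is where $4nM=2n^2(n+1)$ actually emerges, and it is the real work in merging the cone condition with the residual inequality.
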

\begin{proof}
Assume \eqref{eq:P0noise} has a solution $(\g x_0,\g e_0)$. Let define  $\g v_0 = \nu(\g x_0)$ and $\g \delta = \hat{\g v} - \g v_0$. The proof follows a path similar to that of Theorem~3.1 in \cite{Donoho06noise}, which was adapted in \cite{Lauer13b} to the group-sparse setting and which is here further extended to the complex case.

Due to the definition of $\hat{\g v}$ as a minimizer of \eqref{eq:P1noise}, $\g \delta$ must satisfy either
$$
	\sum_{j=1}^n \|\g W_j^R \Re(\g v_0 + \g \delta) + \i \g W_j^I \Im(\g v_0 + \g \delta)\| < \sum_{j=1}^n \|\g W_j^R \Re( \g v_0) + \i \g W_j^I \Im( \g v_0)\|
$$
or $\g \delta = \g 0$, in which case the statement is obvious. 
The inequality above can be rewritten as
$$
\sum_{j\in I_0} \|\g W_j^R \Re( \g \delta) + \i \g W_j^I \Im(\g \delta)\| + \sum_{j\notin I_0} \|\g W_j^R \Re(\g v_0 + \g \delta) + \i \g W_j^I \Im(\g v_0 + \g \delta)\| -  \|\g W_j^R \Re( \g v_0) + \i \g W_j^I \Im( \g v_0)\| < 0 ,
$$
where $I_0=\{j\in\{1,\dots,n\} : \g W_j^R \Re(\g v_0) = \g 0 \ \wedge\ \g W_j^I \Im(\g v_0) = \g 0\}$.  
By the triangle inequality, $\|\g a + \g b\|- \|\g a\| \geq -\|\g b\|$, this implies 
\begin{equation}\label{eq:proofainfb}
	\sum_{j\in I_0}\|\g W_j^R \Re( \g \delta) + \i \g W_j^I \Im(\g \delta)\|  - \sum_{j\notin I_0}\|\g W_j^R \Re( \g \delta) + \i \g W_j^I \Im(\g \delta)\|  < 0.
\end{equation}
In addition, $\g \delta$ must satisfy the constraints in \eqref{eq:P1noise} as 
$$
	\|\Re(\g A(\g v_0 + \g \delta)) - \g y\|_2\leq \varepsilon ,
$$	
in which $\g y$ can be replaced by $\Re(\g A \g v_0 ) + \g e_0$, leading to 
$$
	\|\Re(\g A\g \delta ) - \g e_0\|_2\leq \varepsilon .
$$
Using $\|\g a\|_2 \leq \|\g a - \g b\|_2 + \|\g b\|_2$, this implies $\|\Re(\g A\g \delta)\|_2\leq 2 \varepsilon$, which further gives
\begin{align}
	4\varepsilon^2 &\geq \|\Re(\g A\g \delta)\|_2^2 = \|\tilde{\g A} \tilde{\g \delta}\|_2^2  = \|\tilde{\g A} \tilde{\g W}^{-1}\tilde{\g W}\tilde{\g \delta}\|_2^2 \nonumber
	= (\tilde{\g W} \tilde{\g \delta})^T \tilde{\g W}^{-1} \tilde{\g A}^T \tilde{\g A}\tilde{\g W}^{-1} (\tilde{\g W} \tilde{\g \delta}) \nonumber\\
	&= \|\tilde{\g W}\tilde{\g \delta}\|_2^2  + (\tilde{\g W} \tilde{\g \delta})^T ( \tilde{\g W}^{-1} \tilde{\g A}^T \tilde{\g A}\tilde{\g W}^{-1} - \g I) (\tilde{\g W} \tilde{\g \delta}) \nonumber\\
	&\geq \|\tilde{\g W}\tilde{\g \delta}\|_2^2  - \left| (\tilde{\g W} \tilde{\g \delta})^T ( \tilde{\g W}^{-1} \tilde{\g A}^T \tilde{\g A}\tilde{\g W}^{-1} - \g I) (\tilde{\g W} \tilde{\g \delta})\right|\nonumber\\
	&\geq \|\tilde{\g W}\tilde{\g \delta}\|_2^2  - | \tilde{\g W} \tilde{\g \delta}|^T | \tilde{\g W}^{-1} \tilde{\g A}^T \tilde{\g A}\tilde{\g W}^{-1} - \g I| |\tilde{\g W} \tilde{\g \delta}|\nonumber\\
	&\geq \|\tilde{\g W}\tilde{\g \delta}\|_2^2  - \mu(\tilde{\g A}) (\|\tilde{\g W}\tilde{\g \delta}\|_1^2 - \|\tilde{\g W}\tilde{\g \delta}\|_2^2 ) \nonumber\\
	&= ( 1 + \mu(\tilde{\g A}) )\|\tilde{\g W}\tilde{\g \delta}\|_2^2  - \mu(\tilde{\g A}) \|\tilde{\g W}\tilde{\g \delta}\|_1^2  
	\label{eq:proofstabilitygroup1}
\end{align}
where we used $\tilde{\g W}^{-1}\tilde{\g W} = \g I$ and the fact that the diagonal entries of $|\tilde{\g W}^{-1} \tilde{\g A}^T \tilde{\g A}\tilde{\g W}^{-1} - \g I|$ are zeros while off-diagonal entries are bounded from above by $\mu(\tilde{\g A})$. 

Due to $\g W_j^R \Re( \g \delta) + \i \g W_j^I \Im(\g \delta)$ being a vector with a subset of entries from $\g W^R \Re( \g \delta) + \i \g W^I \Im(\g \delta)$, we have $\|\tilde{\g W}\tilde{\g \delta}\|_2^2 = \|\g W^R \Re(\g \delta) + \i \g W^I \Im(\g \delta) \|^2  \geq  \|\g W_j^R \Re( \g \delta) + \i \g W_j^I \Im(\g \delta)\|^2$, $j=1,\dots,n$, and thus
\begin{equation}\label{eq:bounddeltaL2}
	 \|\tilde{\g W}\tilde{\g \delta}\|_2^2 \geq  \frac{1}{n} \sum_{j=1}^n  \|\g W_j^R \Re( \g \delta) + \i \g W_j^I \Im(\g \delta)\|^2 
	 . 
\end{equation}
Since the groups defined by the $\g W_j$'s overlap, $\|\tilde{\g W}\tilde{\g \delta}\|_2=\|\g W^R \Re( \g \delta) + \i \g W^I \Im(\g \delta)\| \leq \sum_{j=1}^n  \|\g W_j^R \Re( \g \delta) + \i \g W_j^I \Im(\g \delta)\|$, and the squared $\ell_1$-norm in~\eqref{eq:proofstabilitygroup1} can be bounded by 
\begin{equation}\label{eq:bounddeltaL1}
	\|\tilde{\g W}\tilde{\g \delta}\|_1^2  \leq M \|\tilde{\g W}\tilde{\g \delta}\|_2^2 \leq M\left(  \sum_{j=1}^n  \|\g W_j^R \Re( \g \delta) + \i \g W_j^I \Im(\g \delta)\|\right)^2 .
\end{equation}
Introducing the bounds \eqref{eq:bounddeltaL2}--\eqref{eq:bounddeltaL1} in \eqref{eq:proofstabilitygroup1} yields
\begin{equation}\label{eq:proofstabilitygroup2}
 \frac{1 + \mu(\tilde{\g A}) }{n} \sum_{j=1}^n  \|\g W_j^R \Re( \g \delta) + \i \g W_j^I \Im(\g \delta)\|^2  - \mu(\tilde{\g A}) M \left(  \sum_{j=1}^n  \|\g W_j^R \Re( \g \delta) + \i \g W_j^I \Im(\g \delta)\|\right)^2  \leq 4\varepsilon^2 .
\end{equation}
We will now use this inequality to derive an upper bound on $\sum_{j=1}^n  \|\g W_j^R \Re( \g \delta) + \i \g W_j^I \Im(\g \delta)\|^2$, which will also apply to $\|\tilde{\g W}\tilde{\g \delta}\|_2^2 \leq \sum_{j=1}^n  \|\g W_j^R \Re( \g \delta) + \i \g W_j^I \Im(\g \delta)\|^2 = \sum_{j=1}^n (\|\g W_j^R \Re(\g \delta) \|_2^2 + \|\g W_j^I \Im(\g \delta) \|_2^2 )$, since the groups overlap and the squared components of $\tilde{\g W}\tilde{\g \delta}$ are summed multiple times in the right-hand side. To derive the upper bound, we first introduce a few notations: 
$$
	a =  \sum_{j\in I_0}  \|\g W_j^R \Re( \g \delta) + \i \g W_j^I \Im(\g \delta)\|,\quad b =   \sum_{j\notin I_0} \|\g W_j^R \Re( \g \delta) + \i \g W_j^I \Im(\g \delta)\|,
$$
and 
$$
	c_0 = \frac{\sum_{j\in I_0}  \|\g W_j^R \Re( \g \delta) + \i \g W_j^I \Im(\g \delta)\|^2}{\left(  \sum_{j\in I_0}  \|\g W_j^R \Re( \g \delta) + \i \g W_j^I \Im(\g \delta)\|\right)^2} \in \left[\frac{1}{|I_0|}, 1\right],
$$

$$
	c_1 =  \frac{\sum_{j\notin I_0}  \|\g W_j^R \Re( \g \delta) + \i \g W_j^I \Im(\g \delta)\|^2}{\left(  \sum_{j\notin I_0}  \|\g W_j^R \Re( \g \delta) + \i \g W_j^I \Im(\g \delta)\|\right)^2} \in \left[\frac{1}{n - |I_0|}, 1\right] ,
$$
where the box bounds are obtained by classical relations between the $\ell_1$ and $\ell_2$ norms\footnote{Let $\g u\in\R^{|I_0|}$ with $u_j =  \|\g W_j^R \Re( \g \delta) + \i \g W_j^I \Im(\g \delta)\|$. Then, $c_0 = (\|\g u\|_2 / \|\g u\|_1)^2$ and the bounds are obtained by the classical relation $\forall\g u\in \R^k$, $\|\g u\|_2 \leq \|\g u\|_1 \leq \sqrt{k}\|\g u\|_2$.}.  
With these notations, the term to bound is rewritten as
$$
	\sum_{j=1}^n  \|\g W_j^R \Re( \g \delta) + \i \g W_j^I \Im(\g \delta)\|^2 =  c_0 a^2 + c_1 b^2 ,
$$
while the inequality~\eqref{eq:proofstabilitygroup2} becomes
$$
	\frac{1 + \mu(\tilde{\g A}) }{n} (c_0 a^2 + c_1 b^2)  - \mu(\tilde{\g A}) M (a + b)^2  \leq 4\varepsilon^2 .
$$
We further reformulate this constraint by letting $a=\rho b$:
\begin{equation}\label{eq:proofstabilitygroup3}
	\frac{1 + \mu(\tilde{\g A}) }{n} (c_0 \rho^2 + c_1) b^2  - \mu(\tilde{\g A}) M (1 + \rho)^2 b^2  \leq 4\varepsilon^2 .
\end{equation}
Let $\gamma =  (1 + \rho)^2 / (c_0 \rho^2 + c_1)$. Due to~\eqref{eq:proofainfb}, we have $a<b$ and thus $\rho\in[0,1)$, which, together with the bounds on $c_0$ and $c_1$, gives the constraints $1\leq \gamma\leq 4(n-|I_0|)$. By setting $V = (c_0 \rho^2 + c_1) b^2$, \eqref{eq:proofstabilitygroup3} is rewritten as 
$$
	\frac{1 + \mu(\tilde{\g A}) }{n} V  - \mu(\tilde{\g A}) M \gamma V \leq 4\varepsilon^2 , 
$$
where 
$$
	\frac{1 + \mu(\tilde{\g A}) }{n}  - \mu(\tilde{\g A}) M \gamma \geq \frac{1 + \mu(\tilde{\g A}) }{n}  - 4 (n-|I_0|) \mu(\tilde{\g A}) M > 0, 
$$
since $\gamma\leq 4(n-|I_0|)$ and the positivity is ensured by the condition \eqref{eq:conditionP1groupnoise} and the fact that $\|\g x_0\|_0 = n - |I_0|$. Thus, 
$$
	\|\tilde{\g W}\tilde{\g \delta}\|_2^2 \leq \sum_{j=1}^n  \|\g W_j^R \Re( \g \delta) + \i \g W_j^I \Im(\g \delta)\|^2 = V \leq \frac{4n\varepsilon^2}{1+\mu(\tilde{\g A}) - 4 \mu(\tilde{\g A}) n M \|\g x_0\|_0 } ,
$$
which proves the stability result in~\eqref{eq:stabilityxgroupW} since $\|\tilde{\g W} \tilde{ \g \delta} \|_2^2 = \|\g W^R \Re(\g \delta) + \i \g W^I \Im(\g \delta) \|^2$.

To conclude in the case $\varepsilon=0$, it remains to see that~\eqref{eq:stabilityxgroupW} implies $\hat{\g v} = \nu(\g x_0)$ and that Definition~\ref{def:nuinv} ensures $\nu^{-1}(\nu(\g x_0)) \in T(\g x_0)$. 
\end{proof}

\section{Complex data, but real solutions}
\label{sec:complexreal}

Consider now the following problem:
\begin{align}\label{eq:P0complexreal}
	\min_{\g x\in\R^n} \ & \|\g x\|_0 \\
	\mbox{s.t.}\ & y_i =  |\g q_i^H \g x|^2,\quad i=1,\dots,N, \nonumber
\end{align}
where  the signal $\g x\in\R^n$ and measurements $y_i\in \R$ are assumed to be real while the vectors $\g q_i \in \C^n$ can be complex.
In this case, $\g v = \nu(\g x)$ is also a real vector and solutions can be approximated via a dedicated version of~\eqref{eq:P1vcomplex}:
\begin{align}\label{eq:P1vcomplexreal}
	\min_{\g v\in\R^M} \ & \sum_{j=1}^n \|\g W_j^R \g v\|_2\\
	\mbox{s.t.}\ & \g y = \Re\left(\g A\g v\right) = \Re\left(\g A\right)\g v \nonumber\\
	& v_{jj} \geq 0,\ j=1,\dots,n ,\nonumber
\end{align}
where $\g A \in\C^{N\times M }$ is defined as in Sect.~\ref{sec:firstrelax} and $\g W_j^R = \g W_j \g W^R$ with precompensating weights given by $( \g W^R)_{i,i}= \|\Re(\g A_i)\|_2$. 

For this particular case, Theorem~\ref{thm:groupsparsecomplexreal} below is similar in spirit to Theorem~\ref{thm:groupsparsecomplex}, but allows us to gain a $\sqrt{2}$ factor by using Lemma~\ref{lem:deltacomplexreal} instead of Lemma~\ref{lem:deltacomplex} in order to take into account that $\g \delta$ belongs to $\R^M$ (detailed proof given in Appendix~\ref{sec:proofthmcomplexreal}). This results in a bound on $\|\g x_0\|_0$ similar to the one in Theorem~\ref{thm:groupsparse} for the real case and based on the mutual coherence of the real part of $\g A$. Since $\mu(\Re(\g A)) \leq \mu(\tilde{\g A})$, this also improves (relaxes) the bound compared with the one of Theorem~\ref{thm:groupsparsecomplex}.

\begin{theorem}\label{thm:groupsparsecomplexreal}
Let $\g x_0\in\R^n$ be such that $y_i = \g x_0^T\g q_i \g q_i^H\g x_0$, $i=1,\dots,N$, and $\g v_0=\nu(\g x_0)$. If the condition
$$
	 \|\g x_0\|_0 <  \frac{1}{2\sqrt{n}}\sqrt{1+ \frac{1}{\mu^2(\Re(\g A))}} 
$$
holds, then $\g v_0$ is the unique solution to \eqref{eq:P1vcomplexreal}.
\end{theorem}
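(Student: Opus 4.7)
The plan is to mirror the proof of Theorem~\ref{thm:groupsparsecomplex} with every real/imaginary split collapsed into a purely real argument, which is exactly where the improved constant comes from. Uniqueness of $\g v_0 = \nu(\g x_0)$ as a minimizer of \eqref{eq:P1vcomplexreal} amounts to proving that
$$\sum_{j=1}^n \|\g W_j^R(\g v_0 + \g \delta)\|_2 > \sum_{j=1}^n \|\g W_j^R \g v_0\|_2$$
for every nonzero $\g \delta \in \R^M$ such that $\Re(\g A)\g\delta = \g 0$ and $v_{0,jj} + \delta_{jj} \geq 0$. Splitting the sum according to $I_0 = \{j : \g W_j^R \g v_0 = \g 0\}$ and applying the triangle inequality on the complementary terms reduces this to the sufficient condition
$$\sum_{j=1}^n \|\g W_j^R \g \delta\|_2 - 2\sum_{j\notin I_0} \|\g W_j^R \g \delta\|_2 > 0,$$
and the positivity constraints only further restrict the set of admissible $\g \delta$, so they do not enter the analysis.

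Next, I would bound each $\|\g W_j^R \g\delta\|_2^2$ using Lemma~\ref{lem:deltacomplexreal}, which, because $\g\delta$ is real, only needs to control the single quadratic form $\sum_{i\in G_j} (w_i^R)^2 \delta_i^2$ through the mutual coherence of $\Re(\g A)$ and the constraint $\Re(\g A)\g\delta = \g 0$, giving
$$\|\g W_j^R \g \delta\|_2^2 \leq n\,\frac{\mu^2(\Re(\g A))}{1+\mu^2(\Re(\g A))}\,\|\g W^R \g \delta\|_2^2.$$
This is where the $\sqrt{2}$ improvement comes from: Lemma~\ref{lem:deltacomplex} produces $2n$ because both $\Re(\g\delta)$ and $\Im(\g\delta)$ contribute to the bound, whereas here $\Im(\g\delta) = \g 0$. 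Using $\bigcup_k G_k = \{1,\dots,M\}$ to dominate $\|\g W^R \g\delta\|_2^2$ by $\sum_k \|\g W_k^R \g\delta\|_2^2$, and then that sum by its $\ell_1$-square, I obtain
$$\|\g W_j^R \g \delta\|_2 \leq \frac{\sqrt{n}\,\mu(\Re(\g A))}{\sqrt{1+\mu^2(\Re(\g A))}}\,\sum_{k=1}^n \|\g W_k^R \g\delta\|_2.$$

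Substituting back into the sufficient condition and using $n - |I_0| = \|\g x_0\|_0$, the remaining requirement reads
$$\Bigl(1 - 2\|\g x_0\|_0 \frac{\sqrt{n}\,\mu(\Re(\g A))}{\sqrt{1+\mu^2(\Re(\g A))}}\Bigr) \sum_{k=1}^n \|\g W_k^R \g\delta\|_2 > 0.$$
For $\g\delta\neq\g 0$ the right-hand factor is strictly positive (again because the $G_k$ cover $\{1,\dots,M\}$), so this holds exactly under the hypothesized bound on $\|\g x_0\|_0$, proving uniqueness. The only genuinely new step is the correct invocation of Lemma~\ref{lem:deltacomplexreal} with the factor $n$ rather than $2n$; once that lemma is in hand, the remainder is a near-verbatim adaptation of the proof of Theorem~\ref{thm:groupsparsecomplex}.
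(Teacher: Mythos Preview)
Your proposal is correct and follows essentially the same route as the paper's proof: split over $I_0$, reduce via the triangle inequality to the condition \eqref{eq:proof1cplxreal}, bound each $\|\g W_j^R\g\delta\|_2^2$ via Lemma~\ref{lem:deltacomplexreal} with the factor $n$ (rather than the $2n$ of Lemma~\ref{lem:deltacomplex}), use the covering $\bigcup_k G_k=\{1,\dots,M\}$ to pass to $(\sum_k\|\g W_k^R\g\delta\|_2)^2$, and conclude from $n-|I_0|=\|\g x_0\|_0$. Your explicit remark that the positivity constraints only shrink the admissible perturbation set is a minor clarification the paper's proof leaves implicit.
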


In a typical instance of Problem~\eqref{eq:P0complexreal}, the measurements $\g y$ correspond to the power spectrum of a real signal. However, in this case, Theorem~\ref{thm:groupsparsecomplexreal} does not apply since the solution is known not to be unique due to invariances of the measurements. The following subsections first describe a practical technique to deal with such invariances and then focus on the power spectrum case. 

\subsection{Invariances }

Consider the case where the measurements are invariant to some transformation of the signal. A typical example is when $\g x\in\R^n$ and $\{\g q_i\in\C^n \}$ forms a Fourier basis. Then, the equations $y_i=|\g q_i^H \g x|^2$ are invariant to circular shifts of the components of $\g x$. This is problematic since shifted versions of $\g x$ lead to shuffled\footnote{Circular shifts of $\g x$ lead to rearrangements of the components of $\nu(\g x)$ which are not exactly circular shifts. For example, with $\g x_0=[1,1,0,0]^T$ and $\g x_1=[0, 1,1,0]^T$, we have $\nu(\g x_0)=[1,1,0,0,1,0,0,0,0,0]^T$ and $\nu(\g x_1)=[0,0,0,0,1,1,0,1,0,0]^T$.} versions of $\nu(\g x)$. Thus, multiple shuffled $\g v$'s satisfy the linearized constraints, $y_i = \Re(\g A)\g v$, and so does any convex combination of them. These convex combinations need not be sparse as they combine vectors with different sparsity patterns, but lead to lower or equal values of the convex cost function of~\eqref{eq:P1vcomplexreal}. 

To circumvent this issue, we need to linearize the constraints by a shift-invariant transformation $\phi$, such that $\phi(\g x_0) = \phi(\g x_1)$ for $\g x_0$ and $\g x_1$ two shifted versions of the same vector. To lead to an effective estimation method, the transformation $\phi$ must also retain sparsity in a sense similar to $\nu$.

Consider the transformation $\phi : \R^n \rightarrow \R^M$ defined by\footnote{In the case where the argmax in~\eqref{eq:phishift}	is not a singleton, $k$ is arbitrarily set to the minimum of the indexes in the argmax and $\phi$ cannot be shift-invariant. Assumptions regarding this issue will be made clear in Proposition~\ref{prop:varphi} and Theorem~\ref{thm:uniquereflect} below.}
\begin{equation}\label{eq:phishift}	
		\phi(\g x) = \nu( \mbox{shift}(\g x, 1-k)) , 
		\quad \mbox{with } 
		k = \arg\max_{i\in\{1,\dots, n\}} |x_i| ,  
\end{equation}
where shift$(\cdot,k)$ stands for the $k$-steps circular shift operator. 
The transformation $\phi$ in~\eqref{eq:phishift} first shifts the vector $\g x$ such that the first component is the one with maximal magnitude. This results in a shift-invariant transformation which retains sparsity and the linearization ability via the mapping $\nu$. The linearized constraints remain the same, i.e., $y_i = \Re(\g A)\g \phi$, but another constraint must be added to \eqref{eq:P1vcomplexreal} in order to account for the shift-invariance. 

More precisely, the definition of $\phi$ ensures that 
$$
	\left(\phi(\g x)\right)_{11} \geq \left(\phi(\g x)\right)_{jj} ,\quad  j = 2,\dots,n .
$$
Thus, a valid vector $\hat{\g \phi}$ estimating $\phi(\g x_0)$ can be found by solving
\begin{align}\label{eq:P1phicomplexreal}
	\hat{\g \phi}= \arg\min_{\g \phi\in\R^M} \ & \sum_{j=2}^n \|\g W_j^R \g \phi\|_2\\
	\mbox{s.t.}\ & \g y = \Re\left(\g A\right)\g \phi \nonumber\\
	& \phi_{11} \geq \phi_{jj}  \geq 0,\ j=2,\dots,n .\nonumber
\end{align}
Note that the the cost function does not involve the first group of variables since $x_1$ is assumed to be nonzero. 
In comparison with  \eqref{eq:P1vcomplexreal}, this formulation is still convex but cannot have multiple solutions that are shifted/shuffled versions of one another. 

Finally, the set of shifted solutions $\{\g x_k\}_{k=1}^n$ to \eqref{eq:P0complexreal} is approximated by the set 
$$
	\phi^{-1}(\hat{\g \phi}) = \left\{ \mbox{shift}(\nu^{-1}(\hat{\g \phi}), k) \right\}_{k=1}^n .
$$

Following a similar approach, we can additionally take into account reflections by defining 
\begin{equation}\label{eq:phireflect}
	\begin{cases}
		\phi(\g x) = \nu(\g x_2 ) \\
		\g x_2 = \varphi(\g x) = \begin{cases}\g x_1, \ \mbox{if }\sum_{i=2}^{n/2} |x_{1i}|^2 \geq \sum_{i=2+\frac{n}{2}}^{n} |x_{1i}|^2 \\
									\mbox{shift}(\mbox{reflection}(\g x_1), 1),\ \mbox{otherwise}\end{cases}\\
		 \g x_1 =  \mbox{shift}(\g x, 1-k) \\
			k = \arg\max_{i\in\{1,\dots, n\}} |x_i| , 
	\end{cases}	
\end{equation}
where we assume $n$ to be even and reflection$(\cdot)$ is the reflection operator defined by reflection$(\g x) = [x_{n}, x_{n-1},\dots,x_2,x_1]^T$. 
In plain words, the transformation $\phi$ in~\eqref{eq:phireflect} first shifts the vector $\g x$ such that the first component is the one with maximal magnitude. Then, it applies a centered reflection to the shifted $\g x$, named $\g x_1$, only if the sum of squares over the first entries of $\g x_1$ (without the first one) is smaller than the one over the last ones. If this is the case, the result is shifted again to recover the first component of $\g x_2$ with maximal magnitude. 
Finally, the Veronese map is applied to $\g x_2$ to give $\phi(\g x)$. 

Invariance of $\phi = \nu\circ \varphi$ to circular shifts and reflections is implied by the invariance of $\varphi$ given in the Proposition below (proof in Appendix~\ref{sec:proofprop}). 

\begin{proposition}\label{prop:varphi}
For all $\g x\in\C^n$ such that $\left|\arg\max_{i\in\{1,\dots,n\}} |x_i| \right| = 1$, the following statements hold for the transformation $\varphi$ defined in~\eqref{eq:phireflect}:
\begin{enumerate}
	\item $\varphi$ is idempotent, i.e., $\varphi\circ\varphi(\g x) = \varphi(\g x)$;
\end{enumerate} 
and, for all $\g x$ additionally satisfying $\sum_{i=2}^{n/2} |x_{1i}|^2 \neq \sum_{i=2+\frac{n}{2}}^{n} |x_{1i}|^2 $ with $\g x_1 = \mbox{shift}(\g x, \arg\max_{i\in\{1,\dots,n\}} |x_i| )$, 
\begin{enumerate}
	\item[2.] $\varphi$ is shift-invariant, i.e., $\forall s\in \mathbb{Z},\ \varphi(\g x) = \varphi(\mbox{shift}(\g x, s))$;
	\item[3.] $\varphi$ is reflection-invariant, i.e., $\varphi(\g x) = \varphi(\mbox{reflection}(\g x))$.
\end{enumerate}
Note that statements 2 and 3 imply that $\varphi$ is invariant to any combination of shifts and reflections. 
\end{proposition}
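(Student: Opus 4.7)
The plan is to isolate a single involution hidden inside the definition~\eqref{eq:phireflect} and let it do the work. Define $T:\C^n \rightarrow \C^n$ by $T(\g v) = \mbox{shift}(\mbox{reflection}(\g v), 1)$; a direct index-chase gives the explicit formula $(T\g v)_1 = v_1$ and $(T\g v)_i = v_{n-i+2}$ for $i=2,\dots,n$, i.e., $T$ fixes the first entry and reverses the remaining ones in place. Two structural features are then immediate: (i) $T\circ T = \mathrm{Id}$, and (ii) writing $\sigma_1(\g v) = \sum_{i=2}^{n/2} |v_i|^2$ and $\sigma_2(\g v) = \sum_{i=n/2+2}^{n} |v_i|^2$, one has $\sigma_1(T\g v) = \sigma_2(\g v)$ and $\sigma_2(T\g v) = \sigma_1(\g v)$. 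In addition, $T$ preserves the multiset of entries and keeps the (unique) maximum of $|\g v|$ at position~$1$. With this notation, $\varphi$ reads as $\varphi(\g x) = \g x_1$ if $\sigma_1(\g x_1) \geq \sigma_2(\g x_1)$ and $\varphi(\g x) = T(\g x_1)$ otherwise, where $\g x_1 = \mbox{shift}(\g x, 1-k)$ is the shifted version whose unique maximum lies at position~$1$.

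Idempotence then follows directly. In either branch, $\g x_2 := \varphi(\g x)$ has its unique maximum at position~$1$, so the normalizing shift in the second application of $\varphi$ is trivial and its inner vector is $\g x_2$ itself; it only remains to check the branching condition. If $\g x_2 = \g x_1$ it plainly persists, and if $\g x_2 = T(\g x_1)$ we took that branch because $\sigma_1(\g x_1) < \sigma_2(\g x_1)$, which by~(ii) flips to $\sigma_1(T\g x_1) > \sigma_2(T\g x_1)$, so no further reflection is applied. Shift-invariance is a one-line observation: with $\g y = \mbox{shift}(\g x,s)$ the argmax of $|\g y|$ shifts by~$s$, the normalizing shift inside $\varphi(\g y)$ compensates exactly, and hence $\g y_1 = \g x_1$, so the remainder of the computation is identical. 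For reflection-invariance, the key lemma is that $\g y_1 = T(\g x_1)$ whenever $\g y = \mbox{reflection}(\g x)$; granted this, the strict inequality hypothesis forces $T(\g x_1)$ into the branch opposite to that of $\g x_1$, and property~(i) closes both cases: if $\varphi(\g x) = \g x_1$ then $\varphi(\g y) = T(\g y_1) = T(T\g x_1) = \g x_1$, and if $\varphi(\g x) = T(\g x_1)$ then $\varphi(\g y) = \g y_1 = T(\g x_1)$.

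The only delicate step is the reflection-case identity $\g y_1 = T(\g x_1)$. It is routine index-chasing but is the only place where the specific shift/reflection conventions really intervene: the argmax of $|\mbox{reflection}(\g x)|$ sits at position $n-k+1$, so the normalizing shift inside $\varphi(\g y)$ is by $k-n$, and comparing the resulting entrywise formula with that of $T(\g x_1)$ produces the identity after a short case split on whether the running index lies above or below~$k$. The same analysis makes the role of the strict inequality transparent: in the degenerate case $\sigma_1(\g x_1) = \sigma_2(\g x_1)$, both $\g x_1$ and $T(\g x_1)$ would fall in the non-reflection branch, yielding $\varphi(\g x) = \g x_1$ and $\varphi(\g y) = T(\g x_1)$, so invariance would fail precisely by the non-trivial involution~$T$.
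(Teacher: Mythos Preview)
Your proof is correct and covers all three statements. The underlying ideas coincide with the paper's: both hinge on the observation that $\mbox{shift}(\mbox{reflection}(\cdot),1)$ fixes the first entry and reverses the remaining $n-1$ in place, that this operation interchanges the two half-sums $\sigma_1,\sigma_2$, and that the normalized version of $\mbox{reflection}(\g x)$ is exactly this operation applied to the normalized version of $\g x$.

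The difference is one of organization rather than substance. You abstract the map $T$ as an involution and record its two structural properties $T^2=\mathrm{Id}$ and $\sigma_1\circ T=\sigma_2$ once, then let Statements~1 and~3 fall out almost symbolically. The paper instead recomputes the half-sums directly from $\g x_1^\prime = [x_k,x_{k-1},\dots,x_1,x_n,\dots,x_{k+1}]^T$ via a four-way case split on $k$ versus $n/2$, and then handles the two branches of Statement~3 by explicit shift/reflection algebra. Your packaging is shorter and makes the role of the strict-inequality hypothesis more transparent (exactly the degenerate case where $T$ has a nontrivial fixed-point behaviour on the branching condition). Both arguments ultimately perform the same index computation to identify $\g y_1$ with $T(\g x_1)$; you simply defer it to a single lemma rather than spreading it across cases.
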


Proposition~\ref{prop:varphi} also shows the idempotence of $\varphi$, which of course does not transfer to $\phi$ directly but which is however very useful. Indeed, this allows us to test if a candidate $\g x_2$,  supposed to be the result of $\varphi$ applied to some vector, is consistent with the definition of $\varphi$ as $\g x_2 \overset{?}{=} \varphi(\g x_2)$, which can be checked via simple inequalities.
Since these inequalities only involve the (squared) magnitude of the entries in the vector, they can also be easily embedded as linear constraints in \eqref{eq:P1phicomplexreal} to compute an estimate $\hat{\g\phi}$ that is consistent with the transformation \eqref{eq:phireflect}. This yields the convex program
 \begin{align}\label{eq:P1phicomplexreal2}
	\hat{\g \phi}= \arg\min_{\g \phi\in\R^M} \ & \sum_{j=2}^n \|\g W_j^R \g \phi\|_2\\
	\mbox{s.t.}\ & \g y = \Re\left(\g A\right)\g \phi & \mbox{(data fitting)}\nonumber\\
	& \phi_{11} \geq \phi_{jj}  \geq 0,\ j=2,\dots,n & \mbox{(shift-invariance)}\nonumber\\
	& \sum_{i=2}^{n/2} \phi_{ii} \geq \sum_{i=2+\frac{n}{2}}^{n} \phi_{ii} & \mbox{(reflection-invariance)}\nonumber ,
\end{align}
where $\phi_{jj}$ estimates $|x_j|^2$. 
The advantage of using \eqref{eq:P1phicomplexreal2} is that it has a single solution independently of the number of shifted/reflected solutions to \eqref{eq:P0complexreal}, in the sense of the next theorem. 

\begin{theorem}\label{thm:uniquereflect}
If Problem~\eqref{eq:P0complexreal} has a unique set of shifted/reflected solutions $S(\g x_0) = \{ \mbox{shift}(\g x_0, k) : k\in \{1,\dots,n\} \} \cup \{ \mbox{shift}(\mbox{reflection}(\g x_0), k) : k\in \{1,\dots,n\} \}$, and if $\left|\arg\max_{i\in\{1,\dots,n\}} |x_{0i}| \right| = 1$, then 
there is exactly one vector $\nu(\g x)$ with $\g x\in S(\g x_0)$ in the feasible set of~\eqref{eq:P1phicomplexreal2}.  
\end{theorem}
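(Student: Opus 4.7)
The plan is to establish existence and uniqueness separately, using Proposition~\ref{prop:varphi} as the central tool. For existence, I would take $\g x^\star := \varphi(\g x_0)$. Since $\varphi$ is built from circular shifts and at most one reflection, $\g x^\star \in S(\g x_0)$, so the quadratic equations $y_i = |\g q_i^H\g x^\star|^2$ hold, which by the linearization of Section~\ref{sec:firstrelax} gives the data-fitting constraint $\g y = \Re(\g A)\nu(\g x^\star)$. Writing the shift-invariance constraint $\phi_{11}\geq\phi_{jj}\geq 0$ in terms of $\g x^\star$, it reduces to $|x^\star_1|^2 \geq |x^\star_j|^2$, which follows from the idempotence of $\varphi$ (Proposition~\ref{prop:varphi}, statement~1): since $\varphi(\g x^\star) = \g x^\star$, position~$1$ must already carry the maximum magnitude of $\g x^\star$. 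The reflection-invariance constraint follows directly from the definition of $\g x_2$ in~\eqref{eq:phireflect}: in the first branch $\g x^\star = \g x_1$ satisfies the inequality by construction, and in the second branch the shift-and-reflect step exchanges the two partial sums, flipping the inequality to the required direction.

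For uniqueness, let $\g x \in S(\g x_0)$ be any element with $\nu(\g x)$ in the feasible set; I aim to show $\g x = \g x^\star$. The shift constraint gives $|x_1| \geq |x_j|$ for all $j$. Because circular shifts and reflections merely permute the magnitudes of the components, the unique-argmax hypothesis on $\g x_0$ transfers to every member of $S(\g x_0)$, so the maximum magnitude of $\g x$ is attained uniquely at position~$1$. Now $S(\g x_0)$ is the union of $\{\mbox{shift}(\g x_0,k)\}_{k=1}^n$ and $\{\mbox{shift}(\mbox{reflection}(\g x_0),k)\}_{k=1}^n$, and in each orbit exactly one shift places the maximum at position~$1$; call these $\g x_a$ and $\g x_b$. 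Thus $\g x \in \{\g x_a, \g x_b\}$. A direct index computation shows that $\g x_b$ agrees with $\g x_a$ at position~$1$ and reverses positions $2,\dots,n$, so the two partial sums in the reflection-invariance constraint are exchanged between $\g x_a$ and $\g x_b$. The reflection-invariance constraint therefore selects exactly one of $\g x_a$ and $\g x_b$, and by construction of $\varphi$ the selected one is $\varphi(\g x_0) = \g x^\star$.

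The main technical obstacle is the combinatorial bookkeeping of how shifts, reflections, and the Veronese map interact on the index set $\{1,\dots,n\}$: one must verify carefully that only two elements of $S(\g x_0)$ meet the shift-invariance constraint and that the reflection-invariance constraint separates them. The unique-argmax assumption is essential for the first reduction, and the idempotence statement of Proposition~\ref{prop:varphi} is the cleanest way to package the existence side. A minor degeneracy remains when the two partial sums for $\g x_a$ coincide, in which case both $\g x_a$ and $\g x_b$ satisfy every constraint; I would dispatch this either by invoking the implicit non-degeneracy hypothesis that statements~2 and~3 of Proposition~\ref{prop:varphi} already require, or by noting that at the level of the original problem only the orbit $S(\g x_0)$ matters and it is recovered via $\phi^{-1}$ from either choice.
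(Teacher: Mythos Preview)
Your proposal is correct and follows the same overall strategy as the paper: existence via $\g x^\star = \varphi(\g x_0)$ together with Proposition~\ref{prop:varphi}, and uniqueness by checking the extra constraints of~\eqref{eq:P1phicomplexreal2} against the elements of $S(\g x_0)$.

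The noteworthy difference lies in the uniqueness part. The paper's proof argues that any $\g x\in S(\g x_0)$ with $\g x\neq \g x_0$ must have $\arg\max_i |x_i| \neq 1$, hence already fails the shift-invariance constraint, and never invokes the reflection-invariance constraint at all. Your argument is more careful: you observe that the shift constraint alone leaves \emph{two} candidates, one from each of the shift orbit and the reflect-shift orbit, related by the centered reflection $\g x_b = [\,x_{a,1},\ x_{a,n},\ x_{a,n-1},\dots,\ x_{a,2}\,]^T$, and you then use the reflection-invariance constraint to separate them. This extra step is genuinely needed and is glossed over in the paper; in the paper's own illustrative example, both $\g x_2=[4,0,0,1,2,3]^T$ and $\g x_5=[4,3,2,1,0,0]^T$ belong to $S(\g x_0)$ and have their maximum at position~$1$, so the shift constraint does not rule out $\g x_2$. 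Your handling of the equal-partial-sums degeneracy by appealing to the standing hypothesis already required for statements~2 and~3 of Proposition~\ref{prop:varphi} is also the right move.
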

\begin{proof}
Since all $\g x\in S(\g x_0)$ are solutions to \eqref{eq:P0complexreal}, they satisfy the first constraint in~\eqref{eq:P1phicomplexreal2} as $\g y = \Re\left(\g A\right) \nu(\g x)$. 
By Proposition~\ref{prop:varphi}, $\varphi$ is constant over the set $S(\g x_0)$ and equal to one of the vectors from this set, say $\g x_0$, for which we have in particular $\g x_0 = \varphi(\g x_0)$. Then, $\nu(\g x_0) = \nu\circ\varphi(\g x_0) = \phi(\g x_0)$ and thus $\nu(\g x_0)$ satisfies all the constraints of~\eqref{eq:P1phicomplexreal2} and is a feasible point.  

Now take an $\g x\in S(\g x_0)$ with $\g x\neq \g x_0$. Then, either $\g x = \mbox{shift}(\g x_0, k)$ or $\g x = \mbox{shift}(\mbox{reflection}(\g x_0), k)$ for some $k\in\{1,\dots,n\}$. This implies  $\arg\max_i (\nu(\g x))_{ii} = \arg\max_i |x_i| \neq \arg\max_i |x_{0i}| = \arg\max_i (\nu(\g x_0))_{ii}$. Since  $\g x_0$ satisfies the constraints in~\eqref{eq:P1phicomplexreal2}, we have $\arg\max_i (\nu(\g x_0))_{ii}=1$ and thus $\arg\max_i (\nu(\g x))_{ii} \neq 1$, which shows that $\nu(\g x)$ violates the second constraint and is not a feasible point of~\eqref{eq:P1phicomplexreal2}. 
\end{proof}

\paragraph{Illustrative example.} Consider the vectors ($\g x_1$ and $\g x_2$ are not related to the notations of \eqref{eq:phireflect})
$$
	\g x_1 = [1, 2 , 3 , 4, 0, 0]^T,\quad	
		\g x_2 = [4, 0, 0, 1, 2 , 3 ]^T,\quad
			\g x_3 = [0, 0, 4, 3, 2, 1]^T,
$$
$$
				\g x_4 = [2 ,1, 0, 0, 4, 3]^T,\quad
				\g x_5 = [4 ,3, 2, 1, 0, 0]^T, 
$$
which are all obtained by shifts and reflections of the same vector. They all lead to the same $\phi(\g x_i) =  \nu (\g x_5)$. Indeed, the transformation $\phi$ first applies the required shift and reflect operations to map $\g x_i$ to $\g x_5$ which has a first component with the largest magnitude and the largest half-sum over its first entries. Then, $\phi$ computes the Veronese map of $\g x_5$. Also note that all the $\nu(\g x_i)$ are feasible with respect to $\g y = \Re\left(\g A\right)\nu(\g x_i)$, but only $\nu(\g x_5)$ satisfies the additional constraints implementing shift and reflection invariance in \eqref{eq:P1phicomplexreal2}. 
Therefore, we can obtain $\hat{\g\phi}= \nu(\g x_5)$ as the unique solution to \eqref{eq:P1phicomplexreal2}.

\subsection{Support recovery from the power spectrum}
\label{sec:fourier}

When the measurements correspond to the squared magnitude of the Fourier transform of the signal, i.e., its (squared) power spectrum, there is another issue beside shift/reflection-invariances. Even with the correct support $\mathcal{S} = supp(\g x_0)$, the linear system $\g y = \Re(\g A_\mathcal{S})\g \phi_\mathcal{S}$, where $\g A_\mathcal{S}$ is the submatrix of $\g A$ with the corresponding columns, is under-determined. 
More precisely, for all $i$ and all $j$, $|q_{ij}| = 1$, which implies $(\g a_i)_{jj} = (\nu(\g q_i))_{jj} = 1$ and $\g A_{jj} = \g 1$. Thus, even when limited to the support $\mathcal{S}$, $\g A_\mathcal{S}$ has $|\mathcal{S}|$ similar columns and is rank-defficient. 

Therefore, in this case the proposed approach cannot exactly recover $\g x_0$ and \eqref{eq:P1phicomplexreal2} is used to estimate the support $\mathcal{S}$. Indeed, knowing the correct support can be useful for other methods dedicated to the classical phase retrieval problem \cite{Fienup82}. 

Though this case seems unfavorable, if the number of measurements satisfies $N \geq 2n-1$, the autocorreleation of $\g x_0$ (with zero padding), 
$$
	r_k = \sum_{i=1}^n x_i x_{i+k},\quad k=-n+1,\dots,n-1 ,
$$
can be computed via the inverse Fourier transform of $\g y$ and thus is available. This information can be included as linear constraints on the components $\phi_{i(i+k)}$ of $\g \phi$ estimating the cross products $x_i x_{i+k}$ to drive the solution towards a satisfactory one. 

In addition, following \cite{Jaganathan12,Shechtman13}, this can be used to restrict the support of the solution as follows.
First, note that with $N\geq 2n-1$, the measurements are not invariant to circular shifts of $\g x$ but of $\g x$ with zero-padding. Thus, we cannot assume $|x_1| \geq |x_j|$, $j=2,\dots,n$. However, we can fix $x_1 \neq 0$. Then, for all base variable index $j\in\{1,\dots,n\}$,
\begin{equation}\label{eq:acorrsupp}	
	 r_{k} = 0,\ k= j-1,\dots,n-1\quad \Rightarrow\quad x_{j} = 0.
\end{equation}
Thus, the corresponding variables in $\g \phi$ can be set to zero, i.e., $\g W_j \g \phi = \g 0$. 
Moreover, let $j_m$ be the minimum of the $j$'s satisfying the condition in~\eqref{eq:acorrsupp}, then $ x_{j_m -1} \neq 0$, which can be favored by removing the corresponding term in the cost function. Combining all the information on the solution in a convex program leads to the final formulation 
\begin{align}\label{eq:P1fourier}
	\hat{\g \phi}= \arg\min_{\g \phi\in\R^M} \ & \sum_{j=2}^{j_m-2} \|\g W_j^R \g \phi\|_2  & \\
	\mbox{s.t.}\ & \g y = \Re\left(\g A\right)\g \phi  & \mbox{ (data fitting)}\nonumber\\
	& \phi_{jj}  \geq 0,\ j=1,\dots,j_m-1   & \mbox{ (structural knowledge)}\nonumber\\
	& \sum_{i=2}^{n/2} \phi_{ii} \geq \sum_{i=2+\frac{n}{2}}^{n} \phi_{ii}  & \mbox{ (reflection-invariance)} \nonumber \\
	&  \sum_{i=1}^{n-k} \phi_{i(i+k)} = r_k, \ k=0,\dots,n-1   & \mbox{ (autocorrelation)}\nonumber\\
	& \g W_j \g \phi = \g 0,\ j=j_m,\dots, n    & \mbox{ (restricted support)} .  \nonumber
\end{align}

A noise-tolerant version of~\eqref{eq:P1fourier} is obtained by replacing the constraint $\g y = \Re\left(\g A\right)\g \phi $ by $\|\g y - \Re\left(\g A\right)\g \phi\|_2 \leq \varepsilon$. This modification also applies to~\eqref{eq:P1vcomplexreal} and~\eqref{eq:P1phicomplexreal2}.

\section{Experiments} 
\label{sec:exp}

For the experiments, we consider two variants of the proposed approach: the convex method described in details in the previous sections and the greedy method for solving the group-sparse optimization problems~\eqref{eq:P0groupv} and~\eqref{eq:P0vcomplex}. Implementation details for these two methods are as described in~\cite{Lauer13b}, with slight modifications to handle complex variables. In particular, the convex method uses the iterative reweighting of \cite{Candes08} adapted to the group-sparse setting to enhance the sparsity of the solution. The greedy method starts with an empty support and, at each iteration, adds the group of variables in $\g v$ corresponding to the base variable $x_j$ that results in the best approximation of $\g y$. 

For complex signals, $\g x_0$, we measure the relative error corresponding to the normalized distance between the estimate $\hat{\g x}$ and the set $T(\g x_0)$, i.e., $\min_{\g x\in T(\g x_0)} \ \|\hat{\g x} - \g x\| / \|\g x\|$. 
Exact recovery is detected when this error is smaller than $10^{-6}\|\g x_0\|$.  

\paragraph{Exact recovery in the noiseless case.}
We estimate the probability of exact recovery at various sparsity levels, $\|\g x_0\|_0$, in the complex case where $n=20$ and $N=50$. For each sparsity level, the probability is estimated as the percentage of successful trials over a Monte Carlo experiment with 100 trials. In each trial, $N$ complex vectors $\g q_i \in\C^n$ are drawn from a zero-mean Gaussian distribution of unit variance to measure a random signal $\g x_0 \in\C^n$ with $\|\g x_0\|_0$ nonzero entries at random locations whose values are drawn from a zero-mean Gaussian distribution of unit variance. Results shown in Fig.~\ref{fig:exact} indicate that the proposed approach can exactly recover sufficiently sparse signals with high probability. 

Similar experiments are performed to evaluate the influence of the number of measurements $N$ on the probability of exact recovery. Results reported in the right plot of Fig.~\ref{fig:exact} show that, with a sparsity level of $\|\g x_0\|_0= 4$, the convex method requires more measurements for perfect recovery than the greedy strategy. However, $N \approx \|\g x_0\|_0^3$ measurements are already sufficient to exactly recover $\g x_0$ in all trials.  
\begin{figure}
	\centering
	\includegraphics[width=.4\linewidth]{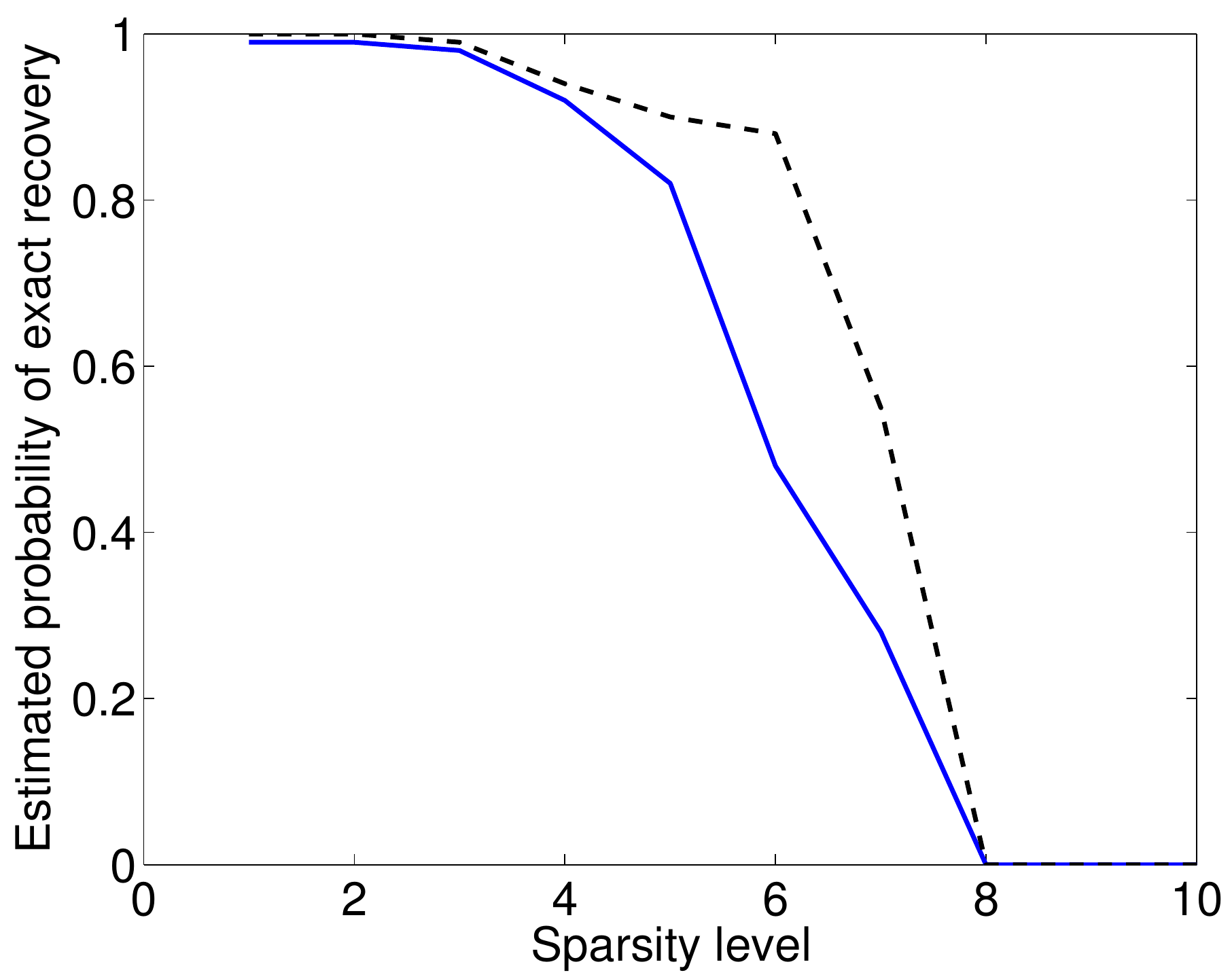}
	\includegraphics[width=.4\linewidth]{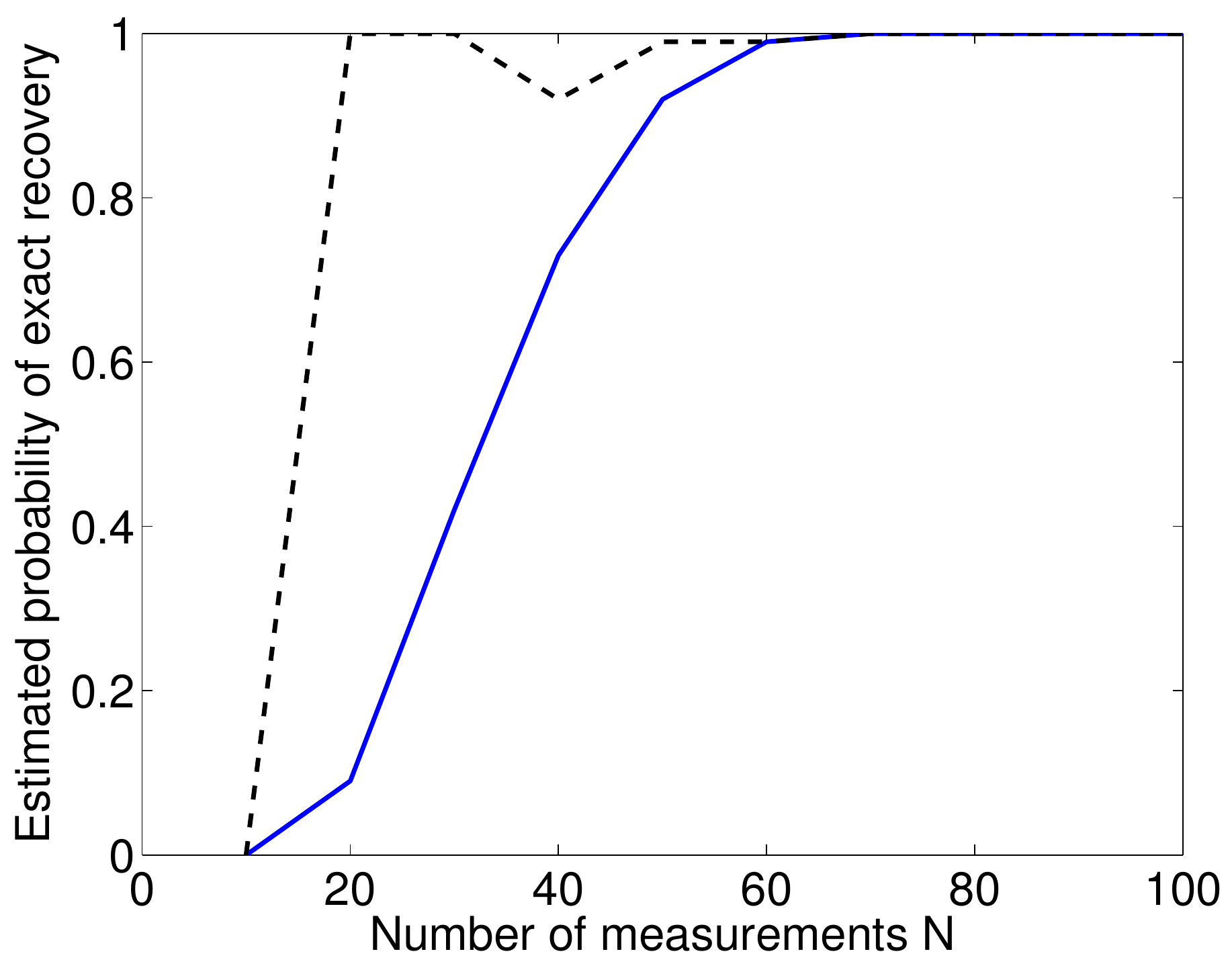}	
	\caption{Estimated probability of exact recovery in the noiseless case for the convex relaxation~\eqref{eq:P1vcomplex} (plain line) and the greedy method applied to~\eqref{eq:P0vcomplex} (dashed line) versus the number of nonzeros (left) and the number of measurements (right). \label{fig:exact}}
\end{figure}

\paragraph{Stable recovery in the noisy case.}
We now consider the noisy case where $y_i =  |\g q_i^H \g x_0|^2 + e_i$, $i=1,\dots,N$, and $\|\g e\|_2 \leq \varepsilon$. Over 100 trials, Figure~\ref{fig:noise} reports the mean relative error and the rate of successful support recovery for $N=50$ and $\varepsilon = 3$. These results show that sufficiently sparse signals can be accurately estimated from noisy quadratic measurements and that the probability of support recovery in this case follows a curve similar to the one of the probability of exact recovery in the noiseless case. This means that the methods are robust to noise regarding the recovery of the correct support. 

\begin{figure}
	\centering
	\includegraphics[width=.4\linewidth]{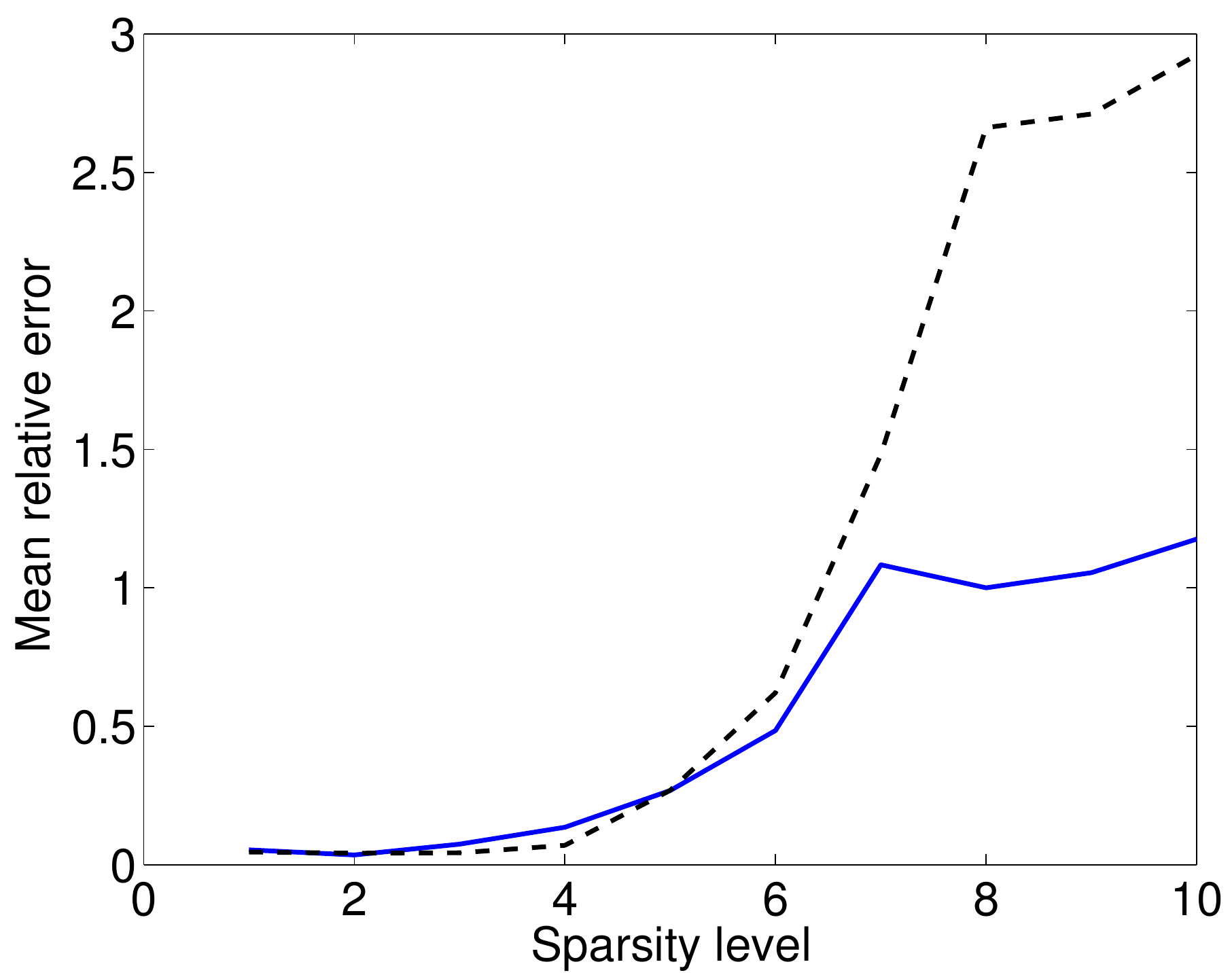}
	\includegraphics[width=.4\linewidth]{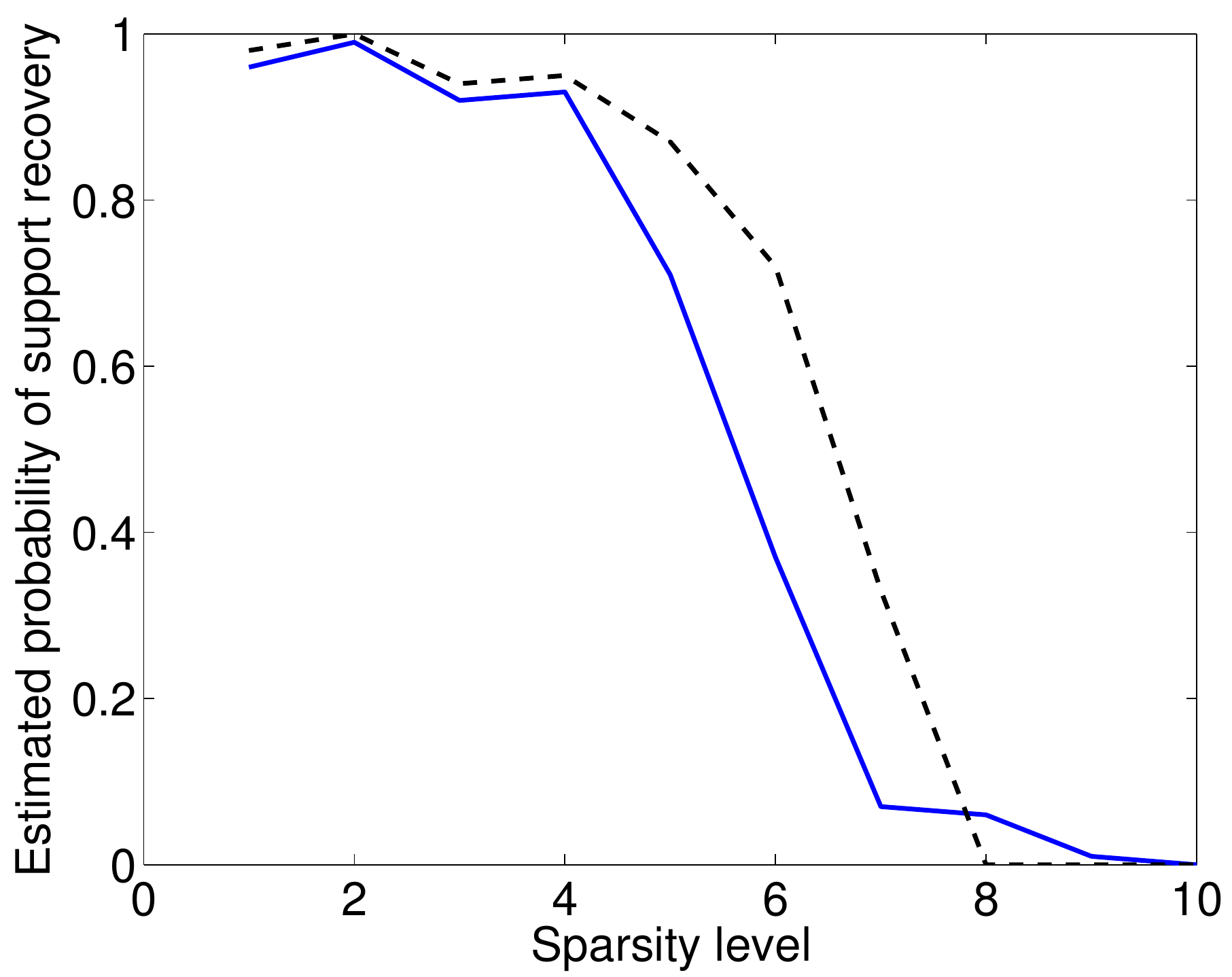}	
	\caption{Mean relative error (left) and estimated probability of support recovery (right) in the noisy case for the convex relaxation~\eqref{eq:P1vcomplex} (plain line) and the greedy method applied to~\eqref{eq:P0vcomplex} (dashed line).  \label{fig:noise}}
\end{figure}

\paragraph{Support estimation from the power spectrum.}
We now test if the convex method is robust to invariances, such as the ones discussed in Sect.~\ref{sec:fourier}, when estimating the support. In particular, we start with a setting in which $\g x\in\R^n$, $N=n=20$ and $\g q_i^H$, $i=1,\dots,N$, are the rows of the $n$-point Fourier matrix. In this case, the autocorrelation cannot be computed and the convex formulation~\eqref{eq:P1phicomplexreal2} is used to estimate the support. Then, we perform similar experiments but with oversampling ($N=2n$), thus allowing for the computation of the autocorrelation and the use of~\eqref{eq:P1fourier}. 
Results shown in Fig.~\ref{fig:fourier} indicate that by using~\eqref{eq:P1phicomplexreal2} we can recover the correct support for sufficiently sparse signals without oversampling, i.e., with as many measurements as unknowns, while using more information extracted from the autocorrelation of the signal only slightly helps to recover larger supports. 

\begin{figure}
	\centering
	\includegraphics[width=.4\linewidth]{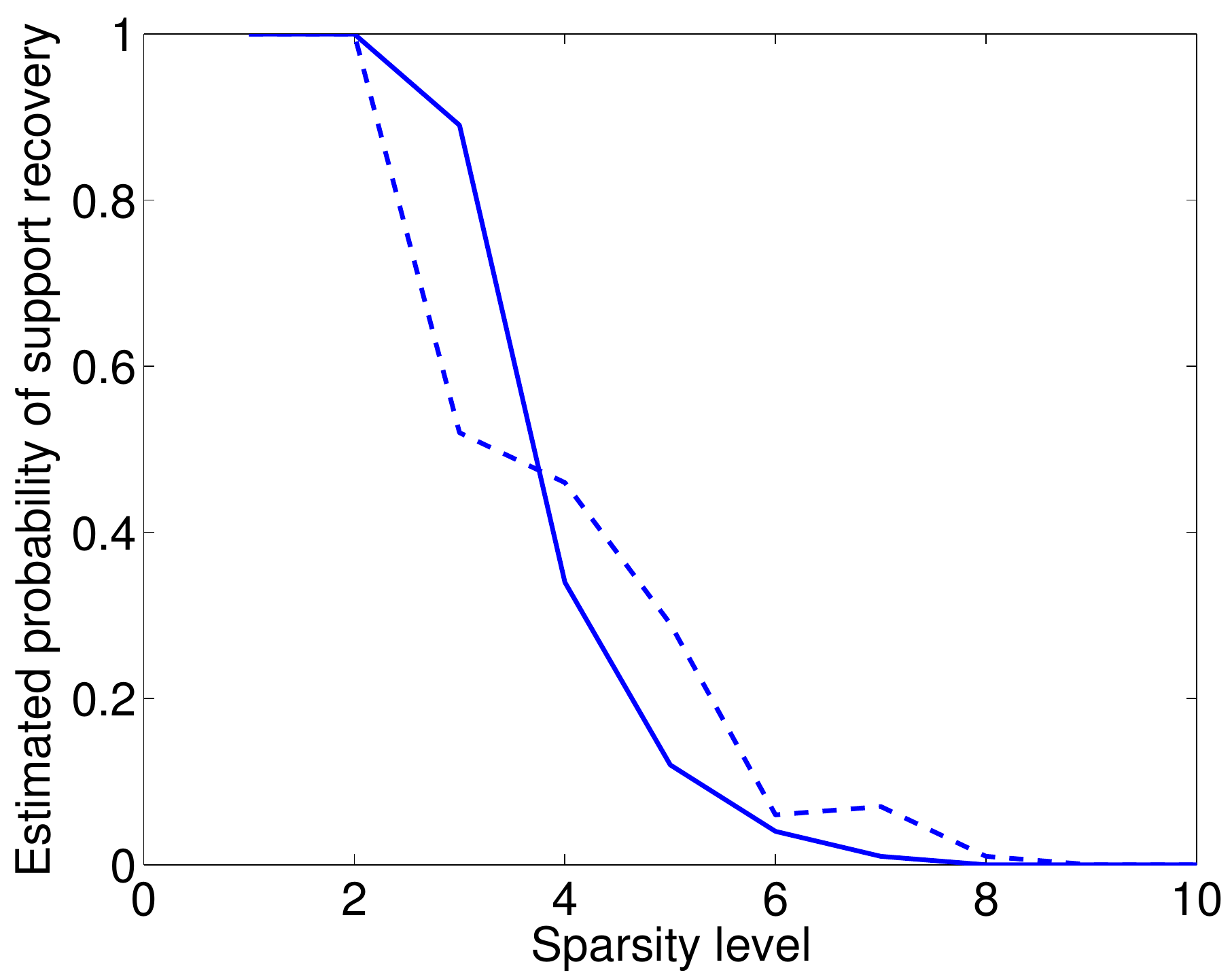}
	\caption{Estimated probability of support recovery from the magnitude of the Fourier transform when using~\eqref{eq:P1phicomplexreal2} for $N=n$ (plain line) and when using~\eqref{eq:P1fourier} for $N=2n$ (dashed line) versus the sparsity level. \label{fig:fourier}}
\end{figure}

\section{Conclusions}

The paper proposed a new approach to phase retrieval of sparse signals. This approach is based on a group-sparse optimization formulation of the problem with linearized constraints. 
Exact and stable recovery results were shown for a convex relaxation of this formulation both in the real and complex case. 
Invariances to circular shifts and reflections that are common in phase retrieval problems were also discussed and a practical technique was given to prevent these from breaking the sparsity of the solution. 

Future work will focus on deriving theoretical guarantees for the invariance issue. Another direction of research concerns the analysis of greedy algorithms for the group-sparse optimization problem, which proved as valuable as the convex relaxations in experiments for measurements without invariance. How to deal with invariances in these methods will also be investigated.

\bibliographystyle{plain}


\appendix 

\section{Lemmas}
\label{sec:lemmas} 

\begin{lemma}\label{lem:nudiffreal}
Let $\nu$ and $\nu^{-1}$ be defined as in Section~\ref{sec:real}. For all $\g v\in\R^M$ such that $\nu^{-1}(\g v) \neq \g 0$ and $\g v^\prime\in\R^M$, if $\left(\nu(\nu^{-1}(\g v))\right)_{jj} \neq v_{jj}^\prime$ for some $j\in\{1,\dots,n\}$, then $\g v \neq \g v^\prime$. 
\end{lemma}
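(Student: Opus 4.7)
The plan is to prove the statement by contrapositive: assume $\g v = \g v^\prime$ and show that in this case $\left(\nu(\nu^{-1}(\g v))\right)_{jj} = v_{jj}^\prime$ for every $j \in \{1,\dots,n\}$, which contradicts the hypothesis and hence forces $\g v \neq \g v^\prime$. So the whole content of the lemma reduces to verifying the identity $\left(\nu(\nu^{-1}(\g v))\right)_{jj} = v_{jj}$ for all $j$ whenever $\nu^{-1}(\g v) \neq \g 0$.

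To establish this identity, I would unpack the definition of $\nu^{-1}$ given in Section~\ref{sec:real}. Since $\nu^{-1}(\g v) \neq \g 0$, we are in the first (non-trivial) branch of the definition, which means there exists an index $i > 0$ with $v_{ii} > 0$ such that the consistency conditions $v_{ji}^2 / v_{ii} = v_{jj}$ hold for every $j \in \{1,\dots,n\}$, and moreover the $j$th component of $\nu^{-1}(\g v)$ equals $v_{ji}/\sqrt{v_{ii}}$. Then applying $\nu$ to this vector and reading off the "diagonal" component indexed by $jj$ (which corresponds to the square of the $j$th entry) yields
\[
	\left(\nu(\nu^{-1}(\g v))\right)_{jj} = \left(\frac{v_{ji}}{\sqrt{v_{ii}}}\right)^{\!2} = \frac{v_{ji}^2}{v_{ii}} = v_{jj},
\]
where the last equality is precisely the consistency condition built into the definition of $\nu^{-1}$.

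With this identity in hand, the lemma is immediate: if $\g v = \g v^\prime$ then $v_{jj}^\prime = v_{jj} = \left(\nu(\nu^{-1}(\g v))\right)_{jj}$ for every $j$, contradicting the hypothesis that these differ for some $j$. I do not anticipate any real obstacle here — the statement is essentially a sanity check that the inverse Veronese map is a right inverse on the diagonal components as long as the argument lies in the image of $\nu$ (up to sign), which the definition of $\nu^{-1}$ explicitly enforces through its consistency check. The only mild point of care is to note that the case $\nu^{-1}(\g v) = \g 0$ is excluded by hypothesis, so we never land in the trivial branch of the definition.
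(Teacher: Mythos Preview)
Your proposal is correct and follows essentially the same approach as the paper: both arguments establish the identity $\left(\nu(\nu^{-1}(\g v))\right)_{jj} = v_{jj}$ by squaring the $j$th component $v_{ji}/\sqrt{v_{ii}}$ of $\nu^{-1}(\g v)$ and invoking the consistency condition $v_{ji}^2/v_{ii} = v_{jj}$ from the definition of $\nu^{-1}$, then conclude directly. The contrapositive framing you use is just a cosmetic difference.
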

\begin{proof}
Assume that $\nu^{-1}(\g v) \neq \g 0$ and let $i =\min_j j,\ \mbox{s.t. } v_{jj} > 0$.  Then, $\forall j\in\{1,\dots,n\}$, we have
$$
	\left(\nu(\nu^{-1}(\g v))\right)_{jj} = \left(\nu^{-1}(\g v)\right)_j^2 = \frac{v_{ji}^2}{v_{ii}} = v_{jj} ,
$$
where the last equality follows from the definition of $\nu^{-1}$. Therefore, if $\left(\nu(\nu^{-1}(\g v))\right)_{jj} = v_{jj} \neq v_{jj}^\prime$ for some $j\in\{1,\dots,n\}$, we obtain $\g v \neq \g v^\prime$.
\end{proof}

\begin{lemma}\label{lem:bounddelta2}
	Let $\g A = [\g A_1,\dots, \g A_M]$ be an $N\times M$ real matrix with mutual coherence $\mu(\g A)$ as in Definition~\ref{def:mu}. Let $\g W$ be the $M\times M$-diagonal matrix of entries $w_i = \|\g A_i\|_2$. Then, for all $\g \delta \in Ker(\g A)$ and $i\in\{1,\dots,M\}$, the bound
\begin{equation}\label{eq:bounddelta2}
	w_i^2 \delta_i^2 \leq \frac{\mu^2(\g A)}{1+\mu^2(\g A)} \|\g W \g \delta\|_2^2 
\end{equation}
holds. 
\end{lemma}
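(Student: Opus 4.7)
The plan is to exploit the kernel condition $\g A \g \delta = \g 0$ componentwise by projecting onto each column $\g A_i$ and invoking the mutual coherence bound on the resulting inner products.

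First, I would rewrite the kernel condition as $\sum_{j=1}^M \g A_j \delta_j = \g 0$ and take the inner product of both sides with $\g A_i$. Since $\g A_i^T \g A_i = w_i^2$, this isolates the $i$th contribution:
\begin{equation*}
w_i^2\, \delta_i = -\sum_{j\neq i} (\g A_i^T \g A_j)\, \delta_j .
\end{equation*}
Next, I would apply the triangle inequality together with the coherence bound $|\g A_i^T \g A_j| \leq \mu(\g A)\, w_i w_j$, valid for $j\neq i$. Dividing by $w_i > 0$ yields
\begin{equation*}
w_i |\delta_i| \leq \mu(\g A) \sum_{j\neq i} w_j |\delta_j| .
\end{equation*}

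The final step would be to square this inequality and upgrade the $\ell_1$-type sum on the right-hand side into a bound in terms of $\sum_{j\neq i} w_j^2 \delta_j^2$, giving
\begin{equation*}
w_i^2 \delta_i^2 \leq \mu^2(\g A) \sum_{j\neq i} w_j^2 \delta_j^2 = \mu^2(\g A) \bigl( \|\g W \g \delta\|_2^2 - w_i^2 \delta_i^2 \bigr) .
\end{equation*}
Rearranging this produces $(1+\mu^2(\g A))\, w_i^2 \delta_i^2 \leq \mu^2(\g A)\, \|\g W \g \delta\|_2^2$, which is exactly \eqref{eq:bounddelta2}.

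The main obstacle is the squaring step. A naive Cauchy--Schwarz on $\sum_{j\neq i} w_j |\delta_j|$ introduces a spurious factor $M-1$ that does not appear in the stated bound, so the passage from the $\ell_1$-type inequality to the $\ell_2$-type bound cannot be purely elementary: it must re-use the null-space condition $\g A \g \delta = \g 0$ (for instance through the identity $\|\g W \g\delta\|_2^2 = -\sum_{j\neq k} (\g A_j^T \g A_k)\, \delta_j \delta_k$ obtained by expanding $\|\g A \g\delta\|_2^2 = 0$) in order to tie together the weighted $\ell_1$ and $\ell_2$ norms of $\g\delta$. This is where I expect the bulk of the technical work to lie, and it is the step whose analogue will need to be carried through with minor bookkeeping in the complex counterpart Lemma~\ref{lem:deltacomplex} used in the proof of Theorem~\ref{thm:groupsparsecomplex}.
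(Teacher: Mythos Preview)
The paper does not give a self-contained argument here; it simply cites Lemma~2 of \cite{Lauer13b}, so there is no in-paper proof to compare your outline against. Your plan is the natural one and steps~1--3 are correct. The difficulty you isolate in step~4 is not merely technical, however: the passage
\[
w_i|\delta_i|\le \mu(\g A)\sum_{j\neq i} w_j|\delta_j|
\qquad\Longrightarrow\qquad
w_i^2\delta_i^2\le \mu^2(\g A)\sum_{j\neq i} w_j^2\delta_j^2
\]
cannot be carried out in general, because the stated inequality~\eqref{eq:bounddelta2} itself fails without further hypotheses. Take $N=2$, $M=3$, and columns $\g A_1=(1,0)^T$, $\g A_2=(-\tfrac12,\tfrac{\sqrt3}{2})^T$, $\g A_3=(-\tfrac12,-\tfrac{\sqrt3}{2})^T$. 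Then $\g W=\g I$, $\mu(\g A)=\tfrac12$, and $\g\delta=(1,1,1)^T$ satisfies $\g A\g\delta=\g 0$, yet $w_i^2\delta_i^2=1$ while $\tfrac{\mu^2}{1+\mu^2}\|\g W\g\delta\|_2^2=\tfrac{1}{5}\cdot 3=\tfrac35$.

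What your first three steps \emph{do} yield, after adding $\mu(\g A)\,w_i|\delta_i|$ to both sides, is the sharp $\ell_1$ bound $w_i|\delta_i|\le \tfrac{\mu}{1+\mu}\|\g W\g\delta\|_1$ (with equality in the example above). The null-space identity you propose, $\|\g W\g\delta\|_2^2=-\sum_{j\neq k}(\g A_j^T\g A_k)\delta_j\delta_k$, only delivers $\|\g W\g\delta\|_2^2\le \tfrac{\mu}{1+\mu}\|\g W\g\delta\|_1^2$, which points the wrong way for converting the $\ell_1$ bound into~\eqref{eq:bounddelta2}. So the gap you flagged is genuine and cannot be closed as the lemma is presently stated; either an additional hypothesis is implicit in the cited reference, or the downstream results inherit the defect.
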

\begin{proof}
See Lemma~2 in \cite{Lauer13b}.
\end{proof}

\begin{lemma}\label{lem:complexquadratic}
Let the complex Veronese map $\nu$ be as in Definition~\ref{def:veronesecomplex}. Then, for all $\g q\in\C^n$ and $\g x\in\C^n$, the following equality holds: 
$$
	\g x^H \g q \g q^H \g x = 2 \Re\left(\nu(\g q)^H\nu(\g x)\right) - \sum_{j=1}^n  \left(\nu(\g q)\right)_{jj}  \left(\nu(\g x)\right)_{jj} .
$$
\end{lemma}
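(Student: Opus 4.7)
The plan is to expand both sides as finite sums over index pairs and match the terms. First I would rewrite the left-hand side as
$$
	\g x^H \g q \g q^H \g x = |\g q^H \g x|^2 = \sum_{i,j=1}^n \overline{q_i}\, q_j\, x_i\, \overline{x_j},
$$
and split this double sum into the diagonal contribution $\sum_{i=1}^n |q_i|^2 |x_i|^2$ and the off-diagonal part $\sum_{i\neq j} \overline{q_i} q_j x_i \overline{x_j}$. Since the summand at $(j,i)$ is the complex conjugate of the one at $(i,j)$, the off-diagonal part collapses to $2\sum_{i<j} \Re\!\bigl(\overline{q_i} q_j x_i \overline{x_j}\bigr)$, giving a first explicit form of the left-hand side.

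Next I would unfold the right-hand side using Definition~\ref{def:veronesecomplex}. For each unordered pair $\{i,j\}$ with $i<j$ there is exactly one component of $\nu$ with subscript $_{ij}$, equal to $x_i \overline{x_j}$ for $\nu(\g x)$ and to $q_i \overline{q_j}$ for $\nu(\g q)$; for $i=j$ the component is $|x_i|^2$, respectively $|q_i|^2$. Therefore
$$
	\nu(\g q)^H \nu(\g x) = \sum_{i=1}^n |q_i|^2 |x_i|^2 + \sum_{i<j} \overline{q_i}\, q_j\, x_i\, \overline{x_j}.
$$
Taking twice the real part doubles both sums (the diagonal terms are already real), so
$$
	2\Re\bigl(\nu(\g q)^H \nu(\g x)\bigr) = 2\sum_{i=1}^n |q_i|^2 |x_i|^2 + 2\sum_{i<j} \Re\!\bigl(\overline{q_i}\, q_j\, x_i\, \overline{x_j}\bigr).
$$
Subtracting $\sum_{j=1}^n (\nu(\g q))_{jj} (\nu(\g x))_{jj} = \sum_{j=1}^n |q_j|^2 |x_j|^2$ removes the duplicated diagonal and yields exactly the expression obtained for the left-hand side, concluding the proof.

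The only real obstacle is bookkeeping: one must verify that the convention choosing $x_i\overline{x_j}$ rather than $x_j\overline{x_i}$ as the representative at subscript $_{ij}$ (for $i<j$) is applied consistently to both $\nu(\g x)$ and $\nu(\g q)$, so that the Hermitian inner product $\nu(\g q)^H\nu(\g x)$ produces the term $\overline{q_i}q_j x_i\overline{x_j}$ with the correct orientation to match the left-hand side expansion. Once this indexing check is made, the identity reduces to matching of finite sums term by term.
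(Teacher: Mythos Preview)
Your proposal is correct and follows essentially the same route as the paper's proof: expand $\g x^H\g q\g q^H\g x$ as a double sum, separate diagonal from off-diagonal terms, use conjugate symmetry to rewrite the off-diagonal part as $2\sum_{i<j}\Re(\overline{q_i}q_j x_i\overline{x_j})$, and identify each piece with the corresponding component of $\nu(\g q)^H\nu(\g x)$. The only difference is cosmetic: the paper manipulates the left-hand side until it reaches the right-hand side, whereas you expand both sides independently and match; the indexing check you flag is exactly the consistency the paper relies on implicitly.
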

\begin{proof}
\begin{align*}
	\g x^H \g q \g q^H \g x &= \left(\sum_{k=1}^n  \overline{x_k}q_k\right)\left( \sum_{j=1}^n \overline{q_j} x_j \right) \\
		& =  \sum_{j=1}^n \overline{q_j} x_j\sum_{k=1}^n q_k \overline{x_k} =  \sum_{j=1}^n \sum_{k=1}^n \overline{q_j}  q_k x_j \overline{x_k} \\
		&=  \sum_{j=1}^n q_j \overline{q_j} x_j \overline{x_j} + \sum_{j=1}^n \sum_{k\neq j} \overline{q_j}  q_k x_j \overline{x_k} \\
		&=  \sum_{j=1}^n q_j \overline{q_j} x_j \overline{x_j} + \sum_{j=1}^n \left(\sum_{k< j} \overline{q_j}  q_k x_j \overline{x_k} + \sum_{k> j} \overline{q_j}  q_k x_j \overline{x_k} \right) \\
		&=  \sum_{j=1}^n q_j \overline{q_j} x_j \overline{x_j} + \sum_{j=1}^n \sum_{k> j} \left(\overline{q_k}  q_j x_k \overline{x_j} +  \overline{q_j}  q_k x_j \overline{x_k} \right) \\
		&=  \sum_{j=1}^n q_j \overline{q_j} x_j \overline{x_j} + \sum_{j=1}^n \sum_{k> j} \left(\overline{\overline{q_j}  q_k x_j \overline{x_k} } +  \overline{q_j}  q_k x_j \overline{x_k} \right) 
\end{align*}
At this point, we use the fact that $z + \overline{z} = 2\Re(z)$, which yields
\begin{align*}
		\g x^H \g q \g q^H \g x &=  \sum_{j=1}^n q_j \overline{q_j} x_j \overline{x_j}  + 2 \sum_{j=1}^n \sum_{k> j} \Re\left( \overline{q_j}  q_k x_j \overline{x_k} \right) \\
		&=  \sum_{j=1}^n  \left(\nu(\g q)\right)_{jj}  \left(\nu(\g x)\right)_{jj}  + 2 \sum_{j=1}^n \sum_{k > j} \Re\left( \left(\overline{\nu(\g q)}\right)_{jk}  \left(\nu(\g x)\right)_{jk} \right) 
\end{align*}
Since $ \left(\nu(\g q)\right)_{jj}  \left(\nu(\g x)\right)_{jj}=q_j \overline{q_j} x_j \overline{x_j} = |q_j x_j|^2 $ is a real number, it can be introduced in the second sum as
\begin{align*}
		\g x^H \g q \g q^H \g x &=  - \sum_{j=1}^n  \left(\nu(\g q)\right)_{jj}  \left(\nu(\g x)\right)_{jj} + 2 \sum_{j=1}^n \sum_{k \geq j} \Re\left( \left(\overline{\nu(\g q)}\right)_{jk}  \left(\nu(\g x)\right)_{jk} \right)  \\
		&= - \sum_{j=1}^n  \left(\nu(\g q)\right)_{jj}  \left(\nu(\g x)\right)_{jj} + 2 \sum_{j=1}^n \sum_{k \geq j}\left[ \Re\left(\overline{\nu(\g q)}\right)_{jk}  \Re\left(\nu(\g x)\right)_{jk} - \Im\left(\overline{\nu(\g q)}\right)_{jk}  \Im\left(\nu(\g x)\right)_{jk}  \right]  \\
		&= - \sum_{j=1}^n  \left(\nu(\g q)\right)_{jj}  \left(\nu(\g x)\right)_{jj} + 2\left[\Re\left(\overline{\nu(\g q)}\right)^T  \Re\left(\nu(\g x)\right) - \Im\left(\overline{\nu(\g q)}\right)^T  \Im\left(\nu(\g x)\right)	\right]	\\
		&= - \sum_{j=1}^n  \left(\nu(\g q)\right)_{jj}  \left(\nu(\g x)\right)_{jj} + 2\left[\Re\left(\nu(\g q)\right)^T  \Re\left(\nu(\g x)\right) + \Im\left(\nu(\g q)\right)^T  \Im\left(\nu(\g x)\right)	\right]	\\
		&= 2\Re\left(\nu(\g q)^H\nu(\g x)\right) - \sum_{j=1}^n  \left(\nu(\g q)\right)_{jj}  \left(\nu(\g x)\right)_{jj} ,
\end{align*}
where the last equality is due to $\Re(\g a^H \g b) = \Re(\g a)^T \Re(\g b) + \Im(\g a)^T \Im(\g b)$. 
\end{proof}

\begin{lemma}\label{lem:sparsitynuinv}
Let $\nu^{-1}$ be as in Definition~\ref{def:nuinv} and the matrices $\g W_j$ as in Sect.~\ref{sec:firstrelax}. Then, for all $\g v\in\C^M$, 
\begin{equation}\label{eq:sparsitynuinv}
	\|\nu^{-1}(\g v)\|_0\leq \|\{\g W_j\g v\}_{j=1}^n\|_0 .
\end{equation}
\end{lemma}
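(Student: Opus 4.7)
The plan is to mirror the inline argument used in the real case (equation~\eqref{eq:sparsenu_1}), adapted to the complex setting of Definition~\ref{def:nuinv}. First I would split on whether $\nu^{-1}(\g v)=\g 0$: in that case the left-hand side of \eqref{eq:sparsitynuinv} is zero and the inequality is trivial. So assume $\nu^{-1}(\g v)\neq \g 0$; by Definition~\ref{def:nuinv}, this means that the index $i=\min\{j:\Re(v_{jj})>0,\ \Im(v_{jj})=0\}$ is well-defined and positive, and each component satisfies $(\nu^{-1}(\g v))_j = \overline{v_{ji}}/\sqrt{v_{ii}}$.

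The key observation is then the implication
$$
\g W_j \g v = \g 0\ \Longrightarrow\ v_{ji} = 0\ \Longrightarrow\ (\nu^{-1}(\g v))_j = 0.
$$
The first implication holds because, by the definition of $\g W_j$ (Sect.~\ref{sec:firstrelax}), the vector $\g W_j\g v$ collects exactly those components of $\g v$ whose double index involves $j$ (i.e.\ estimating monomials containing $x_j$ or $\overline{x_j}$), and the index $_{ji}$ is one of them; the second implication follows from the explicit formula for $\nu^{-1}$ recalled above together with $\sqrt{v_{ii}}>0$. Taking the contrapositive, any $j$ for which $(\nu^{-1}(\g v))_j\neq 0$ must also satisfy $\g W_j\g v\neq \g 0$, so counting such indices yields $\|\nu^{-1}(\g v)\|_0 \leq \|\{\g W_j\g v\}_{j=1}^n\|_0$.

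There is no real obstacle here; the only point worth double-checking is that the subscript $_{ji}$ really lies among the nonzero column positions of $\g W_j$, which is immediate from Definition~\ref{def:veronesecomplex} (the monomial $x_j\overline{x_i}$ or $x_i\overline{x_j}$ does contain $x_j$ or $\overline{x_j}$). Because the argument is a direct translation of the real-case inline reasoning and does not rely on any ordering of $i$ versus $j$, the lemma should follow in just a few lines.
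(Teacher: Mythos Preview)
Your argument is correct and follows the same contrapositive strategy as the paper, but the specific entry of $\g v$ you pivot on differs. You use $v_{ji}$ (the cross term with the pivot index $i$) and the formula $(\nu^{-1}(\g v))_j=\overline{v_{ji}}/\sqrt{v_{ii}}$; the paper instead uses the diagonal entry $v_{jj}$, via the additional condition $|v_{ji}|^2/v_{ii}=v_{jj}$ built into Definition~\ref{def:nuinv}, to get $(\nu^{-1}(\g v))_j\neq 0\Rightarrow v_{jj}>0\Rightarrow \g W_j\g v\neq\g 0$. Your route is the direct complex analogue of the real-case reasoning around~\eqref{eq:sparsenu_1} and avoids invoking that extra consistency condition; the paper's route has the mild advantage that $v_{jj}$ lies in group $j$ only, though this uniqueness is not actually needed for the inequality. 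Either way the proof is a couple of lines.
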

\begin{proof}
According to Definition~\ref{def:nuinv}, $|\left(\nu^{-1}(\g v)\right)_j|^2 = \frac{|v_{ji}|^2}{ v_{ii} } = v_{jj}$. On the other hand, $v_{jj}$ belongs to a single group of variables generated by the $\g W_j$, i.e., $\left(\g W_k\right)_{(jj)} \neq \g 0 \Leftrightarrow k= j$. Thus, $\left(\nu^{-1}(\g v)\right)_j \neq 0 \Rightarrow v_{jj} > 0 \Rightarrow \g W_j\g v \neq \g 0 \Rightarrow \|\nu^{-1}(\g v)\|_0\leq \|\{\g W_j\g v\}_{j=1}^n\|_0$. 
\end{proof}
However, note that the converse is not true: we can have $\|\nu^{-1}(\g v)\|_0 \neq \|\{\g W_j\g v\}_{j=1}^n\|_0$, e.g., when $v_{jj} = 0$, $j=1,\dots,n$, and $v_{ij} \neq 0$ for some $i$ and $j$.

\begin{lemma}\label{lem:nudiff}
Let $\nu$ and $\nu^{-1}$ be defined as in Definitions~\ref{def:veronesecomplex} and~\ref{def:nuinv}. For all $\g v\in\C^M$ such that $\nu^{-1}(\g v) \neq \g 0$ and $\g v^\prime\in\C^M$, if $\left(\nu(\nu^{-1}(\g v))\right)_{jj} \neq v_{jj}^\prime$ for some $j\in\{1,\dots,n\}$, then $\g v \neq \g v^\prime$. 
\end{lemma}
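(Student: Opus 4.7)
The plan is to follow the same line of reasoning used in Lemma~\ref{lem:nudiffreal}, adapted to the complex setting. The key observation is that the diagonal entries $(\nu(\nu^{-1}(\g v)))_{jj}$ should coincide with $v_{jj}$ whenever $\nu^{-1}(\g v) \neq \g 0$, so that a discrepancy between $(\nu(\nu^{-1}(\g v)))_{jj}$ and $v'_{jj}$ immediately propagates to a discrepancy between $v_{jj}$ and $v'_{jj}$, forcing $\g v\neq \g v'$.

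First, I would invoke the hypothesis $\nu^{-1}(\g v)\neq \g 0$ and the definition of $\nu^{-1}$ (Definition~\ref{def:nuinv}) to extract the existence of an index $i\in\{1,\dots,n\}$ with $v_{ii}\in\R^+$ satisfying the consistency condition $|v_{ji}|^2/v_{ii} = v_{jj}$ for every $j\in\{1,\dots,n\}$. Together with the formula $(\nu^{-1}(\g v))_j = \overline{v_{ji}}/\sqrt{v_{ii}}$, this gives an explicit expression for the entries of $\nu^{-1}(\g v)$.

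Second, I would compute the diagonal of the Veronese map applied to $\nu^{-1}(\g v)$ directly from Definition~\ref{def:veronesecomplex}: for any $j$,
$$
\bigl(\nu(\nu^{-1}(\g v))\bigr)_{jj} = (\nu^{-1}(\g v))_j\,\overline{(\nu^{-1}(\g v))_j} = \frac{\overline{v_{ji}}}{\sqrt{v_{ii}}}\cdot \frac{v_{ji}}{\sqrt{v_{ii}}} = \frac{|v_{ji}|^2}{v_{ii}} = v_{jj},
$$
where the last equality uses the consistency condition from the previous step and $v_{ii}\in\R^+$ ensures that $\sqrt{v_{ii}}$ is the ordinary positive real square root.

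Finally, I would conclude by contraposition: the hypothesis $(\nu(\nu^{-1}(\g v)))_{jj}\neq v'_{jj}$ then reads $v_{jj}\neq v'_{jj}$, whence $\g v\neq \g v'$. The only subtle point, and the one I would check carefully, is that the constraints imposed in Definition~\ref{def:nuinv} (namely $v_{ii}>0$ real and the consistency identities) are indeed enforced precisely when $\nu^{-1}(\g v)\neq \g 0$; this is the main obstacle, but it is essentially built into the definition of $\nu^{-1}$, so the proof is short.
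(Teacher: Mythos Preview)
Your proposal is correct and follows essentially the same argument as the paper: use the hypothesis $\nu^{-1}(\g v)\neq \g 0$ to extract the index $i$ and the consistency identity $|v_{ji}|^2/v_{ii}=v_{jj}$ from Definition~\ref{def:nuinv}, compute $(\nu(\nu^{-1}(\g v)))_{jj}=|v_{ji}|^2/v_{ii}=v_{jj}$, and conclude $v_{jj}\neq v'_{jj}$ hence $\g v\neq \g v'$. The paper's proof is identical in structure and detail.
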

\begin{proof}
Assume that $\nu^{-1}(\g v) \neq \g 0$ and let $i =\min_j j,\ \mbox{s.t. } \Re(v_{jj} ) > 0,\ \Im(v_{jj}) = 0 $.  Then, $\forall j\in\{1,\dots,n\}$, we have
$$
	\left(\nu(\nu^{-1}(\g v))\right)_{jj} = \left(\nu^{-1}(\g v)\right)_j \left(\overline{\nu^{-1}(\g v)}\right)_j = \frac{\overline{v_{ij}}}{\sqrt{v_{ii}}} \frac{{v_{ij}}}{\sqrt{v_{ii}}} = \frac{|v_{ij}|^2}{v_{ii}} = v_{jj} ,
$$
where the last equality follows from Definition~\ref{def:nuinv}. Therefore, if $\left(\nu(\nu^{-1}(\g v))\right)_{jj} = v_{jj} \neq v_{jj}^\prime$ for some $j\in\{1,\dots,n\}$, we obtain $\g v \neq \g v^\prime$.
\end{proof}

\begin{lemma}\label{lem:deltacomplex}
For all $\g \delta\in\C^M$ such that $\Re(\g A\g \delta) = \g 0$, the inequality
$$
	(w_{i}^R)^2 \Re(\delta_{i})^2 + (w_i^I)^2 \Im(\delta_i)^2 \leq \frac{2 \mu^2(\tilde{\g A})}{1+\mu^2(\tilde{\g A})} \|\g W^R \Re(\g \delta) + \i \g W^I \Im(\g \delta) \|^2  
$$
holds with $\tilde{\g A} = [\Re(\g A),\ -\Im(\g A)] \in \R^{N\times 2M}$. 
\end{lemma}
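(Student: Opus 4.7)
The plan is to reduce this complex-valued statement to the real-valued Lemma~\ref{lem:bounddelta2} by stacking the real and imaginary parts of $\g \delta$ into a single real vector that lies in the kernel of $\tilde{\g A}$. Define $\tilde{\g \delta} = [\Re(\g \delta)^T,\ \Im(\g \delta)^T]^T \in \R^{2M}$. The constraint $\Re(\g A \g \delta) = \g 0$ expands as $\Re(\g A)\Re(\g \delta) - \Im(\g A)\Im(\g \delta) = \g 0$, which is exactly $\tilde{\g A}\tilde{\g \delta} = \g 0$. Hence $\tilde{\g \delta} \in \mathrm{Ker}(\tilde{\g A})$, and we are in the setting of Lemma~\ref{lem:bounddelta2} applied to the real matrix $\tilde{\g A}$.

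Next I would verify that the weight matrix of column norms of $\tilde{\g A}$ coincides with $\tilde{\g W}$. The first $M$ columns of $\tilde{\g A}$ are the $\Re(\g A_i)$, whose norms give $(\g W^R)_{i,i}$; the last $M$ columns are the $-\Im(\g A_i)$, whose norms give $(\g W^I)_{i,i}$. So the diagonal matrix of column norms of $\tilde{\g A}$ is precisely $\tilde{\g W}$. Consequently, $\|\tilde{\g W}\tilde{\g \delta}\|_2^2 = \|\g W^R \Re(\g \delta)\|_2^2 + \|\g W^I \Im(\g \delta)\|_2^2 = \|\g W^R \Re(\g \delta) + \i\g W^I \Im(\g \delta)\|^2$.

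Now I apply Lemma~\ref{lem:bounddelta2} to $\tilde{\g A}$ at two indices: the $i$-th coordinate of $\tilde{\g \delta}$, which is $\Re(\delta_i)$ with weight $w_i^R$, and the $(M+i)$-th coordinate, which is $\Im(\delta_i)$ with weight $w_i^I$. This yields
\begin{align*}
  (w_i^R)^2 \Re(\delta_i)^2 &\leq \frac{\mu^2(\tilde{\g A})}{1+\mu^2(\tilde{\g A})} \|\tilde{\g W}\tilde{\g \delta}\|_2^2, \\
  (w_i^I)^2 \Im(\delta_i)^2 &\leq \frac{\mu^2(\tilde{\g A})}{1+\mu^2(\tilde{\g A})} \|\tilde{\g W}\tilde{\g \delta}\|_2^2.
\end{align*}
Summing the two inequalities and substituting the identity for $\|\tilde{\g W}\tilde{\g \delta}\|_2^2$ above yields exactly the claimed bound, with the factor $2$ arising naturally from the sum. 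I do not anticipate any real obstacle here: the only subtlety is the bookkeeping check that the column norms of $\tilde{\g A}$ really are the $w_i^R$ and $w_i^I$ (which uses $\|-\Im(\g A_i)\|_2 = \|\Im(\g A_i)\|_2$), so that Lemma~\ref{lem:bounddelta2} can be invoked as a black box.
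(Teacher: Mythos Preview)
Your proposal is correct and follows essentially the same route as the paper's own proof: both rewrite $\Re(\g A\g\delta)=\g 0$ as $\tilde{\g A}\tilde{\g\delta}=\g 0$ for the stacked real vector $\tilde{\g\delta}$, apply Lemma~\ref{lem:bounddelta2} separately at the indices $i$ and $M+i$, sum the two bounds, and identify $\|\tilde{\g W}\tilde{\g\delta}\|_2^2$ with $\|\g W^R\Re(\g\delta)+\i\g W^I\Im(\g\delta)\|^2$. Your additional explicit check that the column norms of $\tilde{\g A}$ are $w_i^R$ and $w_i^I$ is a welcome piece of bookkeeping that the paper leaves implicit.
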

\begin{proof}
We rewrite the assumption as
\begin{align*}
	\g 0 &= \Re(\g A\g \delta) =  \tilde{\g A} \tilde{\g \delta} ,
\end{align*}
where $\tilde{\g \delta} =  [\Re(\g \delta)^T,\ \Im(\g \delta)^T]^T \in \R^{2M}$. 
Then, we can use Lemma~\ref{lem:bounddelta2} to bound the entries in $\tilde{\g \delta}$ as
$$
	(w_{i}^R)^2 \Re( \delta_{i})^2 \leq \frac{\mu^2(\tilde{\g A})}{1+\mu^2(\tilde{\g A})}  \left\|\tilde{ \g W} \tilde{\g \delta}\right\|_2^2 ,
$$
with $\tilde{ \g W} =  \begin{pmatrix}\g W^R & \g 0\\\g 0 & \g W^I\end{pmatrix}$ a diagonal matrix of entries $(\tilde{ \g W})_{i,i}= \|\tilde{\g A}_i\|_2$, and
$$
	(w_{i}^I)^2 \Im( \delta_{i})^2 \leq \frac{\mu^2(\tilde{\g A})}{1+\mu^2(\tilde{\g A})}   \left\|\tilde{ \g W} \tilde{\g \delta}\right\|_2^2 .
$$
These inequalities lead to
$$
	(w_{i}^R)^2 \Re(\delta_{i})^2 + (w_i^I)^2 \Im(\delta_i)^2  \leq \frac{2 \mu^2(\tilde{\g A})}{1+\mu^2(\tilde{\g A})}  \|\tilde{\g W} \tilde{ \g \delta} \|_2^2 ,
$$
where
$$
	\|\tilde{\g W} \tilde{ \g \delta} \|_2^2 = \|\g W^R \Re(\g \delta) \|_2^2 + \|\g W^I \Im(\g \delta) \|_2^2 = \|\g W^R \Re(\g \delta) + \i \g W^I \Im(\g \delta) \|^2 .
$$
\end{proof}

\begin{lemma}\label{lem:deltacomplexreal}
For all $\g \delta\in\R^M$ such that $\Re(\g A)\g \delta = \g 0$, the inequality
$$
	(w_{i}^R)^2 \delta_{i}^2 \leq \frac{ \mu^2\left(\Re(\g A)\right)}{1+\mu^2\left(\Re(\g A)\right)} \|\g W^R \g \delta \|_2^2  ,
$$
where $\g W^R$ is a diagonal matrix of entries $(\g W^R)_{i,i} = w_i^R = \|\Re(\g A_i)\|_2$. 
\end{lemma}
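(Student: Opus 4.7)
The plan is to observe that this statement is a direct specialization of Lemma~\ref{lem:bounddelta2} to the real matrix $\Re(\g A)$. Lemma~\ref{lem:bounddelta2} is stated for an arbitrary $N\times M$ real matrix with any vector in its kernel; since $\Re(\g A) \in \R^{N\times M}$ and the hypothesis $\Re(\g A)\g\delta = \g 0$ with $\g\delta\in\R^M$ fits that template exactly, we may invoke it.

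More precisely, I would set $\g A' := \Re(\g A)$ and identify the column norms: the $i$th column of $\g A'$ is $\Re(\g A_i)$, so the weights produced by Lemma~\ref{lem:bounddelta2} are $\|\Re(\g A_i)\|_2 = w_i^R$, matching the diagonal entries of $\g W^R$ in the current lemma. Similarly, the mutual coherence of $\g A'$ is $\mu(\Re(\g A))$. Substituting these identifications into the conclusion of Lemma~\ref{lem:bounddelta2} gives
$$
(w_i^R)^2 \delta_i^2 \leq \frac{\mu^2(\Re(\g A))}{1+\mu^2(\Re(\g A))}\|\g W^R \g\delta\|_2^2,
$$
which is exactly the desired inequality.

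There is no real obstacle here: the proof is a one-line appeal to Lemma~\ref{lem:bounddelta2}. The only thing worth noting explicitly is that the scope of application is legitimate, namely that $\g\delta$ is assumed real (so we do not need to split into real and imaginary parts as in Lemma~\ref{lem:deltacomplex}, which is precisely why the $\sqrt{2}$ factor present there does not appear here) and that the kernel condition involves only $\Re(\g A)$, not the full complex matrix. This is what ultimately yields the sharper bound used in Theorem~\ref{thm:groupsparsecomplexreal}.
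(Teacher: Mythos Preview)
Your proposal is correct and matches the paper's own proof exactly: the paper's proof of Lemma~\ref{lem:deltacomplexreal} is the one-line statement that it is a direct consequence of Lemma~\ref{lem:bounddelta2} applied to $\Re(\g A)$ and $\g\delta\in\mathrm{Ker}(\Re(\g A))$. Your additional remarks about the column-norm identification and the absence of the $\sqrt{2}$ factor are accurate elaborations of the same argument.
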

\begin{proof}
This is a direct consequence of Lemma~\ref{lem:bounddelta2} applied to $\Re(\g A)$ and $\g \delta\in Ker(\Re(\g A))$. 
\end{proof}

\section{Proofs}

\subsection{Proof of Theorem~\ref{thm:groupsparse}}
\label{proof:groupsparse}

This proof is similar to the one of Theorem~3 in~\cite{Lauer13b}. 
\begin{proof}
The vector $\g v_0$ is the unique solution to \eqref{eq:P1group} if  the inequality
$$
        \sum_{j=1}^n \|\g W_j\g W (\g v_0 + \g \delta)\|_2 > \sum_{j=1}^n \|\g W_j\g W \g v_0\|_2
$$
holds for all $\g \delta \neq \g 0$ satisfying $\g A\g \delta = 0$. 
The inequality above can be rewritten as
$$
        \sum_{j\in I_0} \|\g W_j\g W \g \delta\|_2 +\sum_{j\notin I_0} \|\g W_j\g W (\g v_0 + \g \delta)\|_2 -  \|\g W_j\g W \g v_0\|_2 > 0 ,
$$
where $I_0=\{j\in\{1,\dots,n\} : \g W_j\g W \g v_0 = \g 0\}$.  
By the triangle inequality, $\|\g a + \g b\|_2 - \|\g a\|_2 \geq -\|\g b\|_2$, this condition is met if
$$
        \sum_{j\in I_0} \|\g W_j\g W \g \delta\|_2 -\sum_{j\notin I_0} \|\g W_j\g W\g \delta\|_2  > 0
$$
or
\begin{equation}\label{eq:proof1}
        \sum_{j=1}^n \|\g W_j\g W\g \delta\|_2 - 2\sum_{j\notin I_0} \|\g W_j\g W\g \delta\|_2  > 0 .
\end{equation}

By defining $G_j$ as the set of indexes corresponding to nonzero columns of $\g W_j$, Lemma~\ref{lem:bounddelta2} yields
\begin{align*}
        \|\g W_j\g W\g \delta\|_2^2 &= \sum_{i\in G_j} w_i^2 \delta_i^2  \leq n\frac{\mu^2(\g A)}{1+\mu^2(\g A)} \|\g W \g \delta\|_2^2 ,
\end{align*}
Due to the fact that $\bigcup_{k\in\{1,\dots,n\}} G_k = \{1,\dots,M\}$, we also have
$$
        \|\g W \g \delta\|_2^2 = \sum_{i=1}^M w_i^2\delta_i^2 \leq  \sum_{k=1}^n \sum_{i\in G_k} w_i^2 \delta_i^2 
        = \sum_{k=1}^n \| \g W_k \g W\g \delta\|_2^2 \leq \left(\sum_{k=1}^n \| \g W_k \g W\g \delta\|_2 \right)^2 ,
$$
which then leads to
\begin{align*}
        \|\g W_j\g W\g \delta\|_2^2 
        & \leq n\frac{\mu^2(\g A)}{1+\mu^2(\g A)}  \left( \sum_{k=1}^n \| \g W_k \g W\g \delta\|_2\right)^2 .
\end{align*}
Introducing this result in \eqref{eq:proof1} gives the condition
$$
        \sum_{j=1}^n \|\g W_j\g W\g \delta\|_2 - 2 (n - |I_0|) \frac{\mu(\g A) \sqrt{n}}{\sqrt{1+\mu^2(\g A)}} \sum_{k=1}^n \| \g W_k\g W \g \delta\|_2   > 0 .
$$
Finally, given that $|I_0| = n - \|\{\g W_j\g v_0\}_{j=1}^n\|_0 = n - \|\g x_0\|_0$, this yields
$$
        \sum_{j=1}^n \|\g W_j\g W\g \delta\|_2 - 2 \|\g x_0\|_0 \frac{\mu(\g A) \sqrt{n}}{\sqrt{1+\mu^2(\g A)}} \sum_{k=1}^n \| \g W_k\g W \g \delta\|_2   > 0 .
$$
or, after rearranging the terms,  
$$
         \|\g x_0\|_0 < \frac{\sqrt{1+\mu^2(\g A)}}{2\mu(\g A) \sqrt{n} } ,
$$
which can be rewritten as in the statement of the Theorem. 
\end{proof}

\subsection{Proof of Theorem~\ref{thm:groupsparsecomplexreal}}
\label{sec:proofthmcomplexreal}

This proof is very similar to the ones of Theorems~\ref{thm:groupsparse} and~\ref{thm:groupsparsecomplex}.
\begin{proof}
The vector $\g v_0$ is the unique solution to \eqref{eq:P1vcomplexreal} if  the inequality
$$
	\sum_{j=1}^n \|\g W_j^R (\g v_0 + \g \delta) \|_2 > \sum_{j=1}^n \|\g W_j^R \g v_0\|_2
$$
holds for all $\g \delta\in\R^M$ such that $\Re(\g A)(\g v_0 + \g \delta) = \g y$, which implies the constraint $ \Re\left(\g A\right)\g \delta = \g 0$ on $\g\delta$.
The inequality above can be rewritten as
$$
	\sum_{j\in I_0} \|\g W_j^R \g \delta\|_2 + \sum_{j\notin I_0} \|\g W_j^R (\g v_0 + \g \delta) \|_2 -  \|\g W_j^R \g v_0 \|_2 > 0 ,
$$
where $I_0=\{j\in\{1,\dots,n\} : \g W_j^R \g v_0 = \g 0 \}$.  
By the triangle inequality, $\|\g a + \g b\| - \|\g a\| \geq -\|\g b\|$ with $\g a = \g W_j^R \g v_0$, this condition is met if
\begin{equation}\label{eq:proof1cplxreal}
	\sum_{j=1}^n \|\g W_j^R \g \delta \|_2  - 2\sum_{j\notin I_0} \|\g W_j^R \g \delta\|_2 > 0 .
\end{equation}

By defining $G_j$ as the set of indexes corresponding to nonzero columns of $\g W_j$, Lemma~\ref{lem:deltacomplexreal} yields
\begin{align*}
	\|\g W_j^R \g \delta \|_2^2 &= \sum_{i\in G_j} (w_i^R)^2 \delta_i^2  
	 \leq n\frac{\mu^2(\Re(\g A))}{1+\mu^2(\Re(\g A))} \|\g W^R \g \delta \|_2^2  .
\end{align*}
Due to the fact that $\bigcup_{k\in\{1,\dots,n\}} G_k = \{1,\dots,M\}$, we also have
\begin{align*}
	\|\g W^R \g \delta\|_2^2  &= \sum_{i=1}^M (w_i^R)^2 \delta_i^2  
	\leq  \sum_{k=1}^n \sum_{i\in G_k} (w_i^R)^2 \delta_i^2  
	= \sum_{k=1}^n \|\g W_k^R  \g \delta\|_2^2 
	 \leq \left(\sum_{k=1}^n \|\g W_k^R \g \delta\|_2 \right)^2 ,
\end{align*}
which then leads to
\begin{align*}
	\|\g W_j^R \g \delta\|_2^2
	& \leq n\frac{\mu^2(\Re(\g A))}{1+\mu^2(\Re(\g A))} \left(\sum_{k=1}^n \|\g W_k^R \g \delta\|_2 \right)^2.
\end{align*}
Introducing this result in \eqref{eq:proof1cplxreal} gives the condition
$$
	\sum_{j=1}^n \|\g W_j^R  \g \delta\|_2  - 2 (n - |I_0|) \frac{\mu(\Re(\g A)) \sqrt{n}}{\sqrt{1+\mu^2(\Re(\g A))}} \sum_{k=1}^n \|\g W_k^R  \g \delta\|_2  > 0 .
$$
Finally, given that $|I_0| = n - \|\g x_0\|_0$, this yields
, for $\g\delta\neq \g 0$,
$$
	 \|\g x_0\|_0 < \frac{\sqrt{1+\mu^2(\Re(\g A))}}{2\mu(\Re(\g A)) \sqrt{n} } ,
$$
which can be rewritten as in the statement of the Theorem. 
\end{proof}

\subsection{Proof of Proposition~\ref{prop:varphi}}
\label{sec:proofprop}

\begin{proof}
To prove Statement 1, note that the operation $\g x_2 = \mbox{shift}(\mbox{reflection}(\g x_1), 1)$ is equivalent to 
$$
	\g x_2 = \begin{bmatrix} x_{11} \\ \mbox{reflection}(\tilde{\g x}_1)\end{bmatrix}, \quad \tilde{\g x}_1 = [ x_{12}, x_{13}, \dots, x_{1n}]^T,
$$
where the reflection is centered on $\tilde{x}_{1(n/2)} = x_{1(1+ n/2)}$. 
Thus, 
$$
	\g x_2 = [x_{11} , x_{1n}, x_{1(n-1)}, \dots, x_{12}]^T
$$
and 
$$
	\sum_{i=2}^{n/2} |x_{2i}|^2 =  \sum_{i=0}^{n/2 - 2} |x_{1(n-i)}|^2 =  \sum_{i=2+\frac{n}{2}}^{n} |x_{1i}|^2 
	>
	\sum_{i=2}^{n/2} |x_{1i}|^2 =  \sum_{i=0}^{n/2 - 2} |x_{2(n-i)}|^2 = \sum_{i=2+\frac{n}{2}}^{n} |x_{2i}|^2 ,
$$
where the inequality holds whenever this operation is performed in~\eqref{eq:phireflect} (i.e., if this is not the case, then $\g x_2 = \g x_1$). 
Therefore, for all $\g x$, we obtain a vector $\g x_2=\varphi(\g x)$ such that $\sum_{i=2}^{n/2} |x_{2i}|^2 \geq \sum_{i=2+\frac{n}{2}}^{n} |x_{2i}|^2$. Since $\arg\max_{i\in\{1,\dots,n\}} |x_{2i}| = 1$ and $\mbox{shift}(\g x_2, 1 - \arg\max_{i\in\{1,\dots,n\}} |x_{2i}| ) = \g x_2$, this implies that $\varphi(\g x_2) = \g x_2$, i.e., $\varphi(\varphi(\g x)) = \varphi(\g x)$.

Statement 2 is easily seen from the fact that $\g x_1$ does not change when $\g x$ is shifted. 

To prove Statement 3, let $\g x^\prime = \mbox{reflection}(\g x) = [x_n, x_{n-1},\dots, x_1]^T$, $k = \arg\max_{i\in\{1,\dots, n\}} |x_i|$, and $k^\prime = \arg\max_{i\in\{1,\dots, n\}} |x_i^\prime|$. Then, $k^\prime = n - k + 1$ and 
\begin{align*}
	\g x_1^\prime &= \mbox{shift}(\g x^\prime, 1-k^\prime) = \mbox{shift}(\g x^\prime, k-n) = \mbox{shift}(\g x^\prime, k)  \\
	&= [x_{n-k+1}^\prime, x_{n-k+2}^\prime, \dots, x_{n}^\prime, x_{1}^\prime, x_2^\prime, \dots, x_{n-k}^\prime]^T \\
	&=  [x_k, x_{k-1}, \dots, x_1, x_n, x_{n-1}, \dots, x_{k+1}]^T.
\end{align*}

Define $\g x_1 = \mbox{shift}(\g x, 1-k) = [x_k, x_{k+1}, \dots, x_n, x_1, x_{2}, \dots, x_{k-1}]^T$. 

If $k< n/2$, we have
$$
	\sum_{i=2}^{n/2} |x_{1i}^\prime|^2 = \sum_{i=1}^{k-1} |x_i|^2 + \sum_{i=1}^{\frac{n}{2} -k } |x_{n-i+1}|^2 
	= \sum_{i=n+2-k}^{n} |x_{1i}|^2  +   \sum_{i=n/2 + 2}^{n-k+1} |x_{1i}|^2
	=\sum_{i=n/2 + 2}^{n} |x_{1i}|^2,
$$ 
and otherwise, if $k\geq n/2$, we obtain
$$
	\sum_{i=2}^{n/2} |x_{1i}^\prime|^2 = \sum_{i=k - \frac{n}{2} + 1}^{k -1} |x_i|^2  
	= \sum_{i=n/2+2}^{n} |x_{1i}|^2  .
$$ 

On the other hand, if $k\leq n/2$, then
$$
	\sum_{i=2+\frac{n}{2}}^{n} |x_{1i}^\prime|^2 = \sum_{i=k+1}^{\frac{n}{2} + k - 1} |x_i|^2 
	=  \sum_{i=2}^{n/2} |x_{1i}|^2  ,
$$
while if $k> n/2$:
$$
	\sum_{i=2+\frac{n}{2}}^{n} |x_{1i}^\prime|^2 = \sum_{i=k+1}^{n} |x_i|^2  +  \sum_{i=1}^{k - \frac{n}{2} - 1} |x_i|^2 
	= \sum_{i=2}^{n-k+1} |x_{1i}|^2   +   \sum_{i=n-k+2}^{n/2} |x_{1i}|^2 =  \sum_{i=2}^{n/2} |x_{1i}|^2  .
$$

Therefore, 
$$
\sum_{i=2}^{n/2} |x_{1i}|^2 > \sum_{i=2+\frac{n}{2}}^{n} |x_{1i}|^2 
\quad \Leftrightarrow\quad \sum_{i=2}^{n/2} |x_{1i}^\prime|^2 < \sum_{i=2+\frac{n}{2}}^{n} |x_{1i}^\prime|^2
$$
and $\varphi$ applies a reflection to $\g x$ if and only if it does not apply one to $\g x^\prime$ (the inequalities above are strict by assumption). 
Thus, in the case $\g x$ is reflected by $\varphi$, we have
$$
	\varphi(\g x^\prime) = \mbox{shift}(\mbox{reflection}(\g x),k) = \mbox{reflection}( \mbox{shift}(\g x, n-k ) )
$$
and
\begin{align*}
	\varphi(\g x) &= \mbox{shift}(\mbox{reflection}( \mbox{shift}(\g x,1-k) ), 1) \\
		&= \mbox{reflection}( \mbox{shift}(\mbox{shift}(\g x,1-k), n -1 ) ) \\
			&=  \mbox{reflection}( \mbox{shift}(\g x,n-k) )\\
			&= \varphi(\g x^\prime) ,
\end{align*}
while in the case $\g x$ is not reflected by $\varphi$, we have
$$
	\varphi(\g x) = \mbox{shift}(\g x,1-k)
$$
and
\begin{align*}
	\varphi(\g x^\prime) &= \mbox{shift}(\mbox{reflection}( \mbox{shift}(\mbox{reflection}(\g x),k) ), 1) \\
		&= \mbox{reflection}( \mbox{shift}(\mbox{shift}(\mbox{reflection}(\g x),k) , n -1 ) ) \\
			&=  \mbox{reflection}( \mbox{shift}(\mbox{reflection}(\g x),n+k-1) )\\
			&=  \mbox{reflection}( \mbox{reflection}(\mbox{shift}(\g x, - k + 1) ) )\\
			&= \mbox{shift}(\g x, 1-k)\\
			&= \varphi(\g x) .
\end{align*}

\end{proof}

\end{document}